\documentclass[acmsmall,screen,nonacm]{acmart}
\AtBeginDocument{%
  }
\usepackage{algorithm}
\usepackage{algpseudocode}
\usepackage{subcaption}

\theoremstyle{acmdefinition} 
\newtheorem{assumption}{Assumption}
\newtheorem{remark}{Remark}
\begin{document}
\title{Queuing for Inference: Dynamic Memory Footprints and Costly Preemption}

\author{Jiahui Ni}
\affiliation{%
 \institution{Purdue University}
 \city{West Lafayette}
 \state{Indiana}
 \country{USA}}
\email{ni169@purdue.edu}

\author{Vishrant Tripathi}
\affiliation{%
 \institution{Purdue University}
 \city{West Lafayette}
 \state{Indiana}
 \country{USA}}
\email{tripathv@purdue.edu}

\begin{abstract}
  Modeling inference workloads for transformer-based neural networks poses unique challenges due to their dynamic memory footprints, head-of-line blocking caused by heavy-tailed request distributions, and memory-dependent preemption overhead arising from the need to load/unload cache state. In this paper, we introduce a new queuing model that captures the impacts of these features on system stability and scheduling design. We show that the Gittins-index policy, which minimizes mean response time in the standard M/G/1 setting, can incur substantial preemption overhead in our model. Motivated by this limitation, we propose the Multi-Level-Priority-Threshold (MLPT) policy, that limits preemption while preserving priority for short jobs. We characterize the mean response time under MLPT and provide empirical rules-of-thumb to solve the resulting threshold-optimization problem. Our key observation is that optimal scheduling is a balance between reducing the response times of shorter jobs and limiting the overhead incurred when preemption occurs. These results provide a theoretical framework for understanding scheduling in memory constrained queuing systems for inference tasks.
\end{abstract}

\maketitle

\section{Introduction}
Transformer-based inference is becoming a major contributor to traffic growth in large-scale infrastructure and compute serving systems. Over the past three decades, traditional performance modeling and queuing literature has served the computing infrastructure community well in analyzing the fundamental limits of system performance and in designing more efficient architectures. However, modern inference workloads have characteristics that are not captured by classical queuing models. In particular, inference requests can have highly variable and often unknown execution times, their memory footprints can grow substantially as they are served, and interrupting a request can incur significant overhead because the state accumulated during execution must be moved or reconstructed. These characteristics couple scheduling decisions with memory occupancy and service capacity in ways that are fundamentally different from conventional compute workloads.

A particularly important characteristic of transformer inference is its dynamically growing memory footprint. During autoregressive generation, a request maintains a Key-Value (KV) cache that grows with the generated sequence length. Consequently, the memory required by a request increases with its attained service: a request that has been served for a long time may occupy substantially more memory than a newly arrived request. Modern LLM serving systems have developed sophisticated mechanisms to manage this memory. For example, continuous batching allows requests to enter and leave a batch at iteration granularity \cite{Yu_orca}, while PagedAttention improves KV-cache memory utilization by reducing fragmentation through paged allocation \cite{Kwon_vllm}. However, these mechanisms do not eliminate the finite capacity of GPU memory. When the available KV-cache capacity is exhausted, requests may need to be preempted and their state transferred to host memory or reconstructed through recomputation, introducing additional data-movement and computational overhead. Thus, memory is not merely a resource that determines how many requests can be simultaneously admitted: as memory pressure increases, it can directly affect how efficiently inference work is completed.

This distinction is important because a finite memory constraint can change the service process of a request. Consider a request whose memory footprint grows with its attained service. While its memory requirement remains below the available capacity, it can be served at its nominal rate. Once its footprint reaches the memory capacity, however, the remaining computation may have to be performed under memory pressure, resulting in a lower effective service rate. The amount of work affected by this slowdown depends on the intrinsic size of the request and on how its memory footprint evolves during execution. Long requests are therefore affected disproportionately: they are more likely to reach the memory limit and, once they do, may spend a significant fraction of their execution in the slower regime. This creates an interaction between job size, attained service, memory occupancy, and effective service capacity that is absent from the standard $M/G/1$ model.

Memory constraints also fundamentally change the cost of preemption. Preemption is attractive for inference workloads because request sizes are generally unknown upon arrival and can exhibit substantial variability. Under FCFS scheduling, a long request can cause significant head-of-line blocking for shorter requests, motivating scheduling policies that prioritize requests according to their attained service \cite{zheng2026schedulingllminferenceuncertaintyaware,fu2024learningtorank}. In a classical $M/G/1$ queue, such policies can exploit preemption without explicitly accounting for the cost of switching between jobs. In inference serving, however, preempting a request can require removing its accumulated KV cache from fast memory and subsequently restoring that state through data movement or recomputation. Because the amount of cached state grows with attained service, the cost of preemption itself can increase as a request runs. Thus, aggressive preemption can improve response time by reducing head-of-line blocking while simultaneously consuming additional system capacity through state movement or reconstruction. Existing serving systems mitigate these costs through mechanisms such as proactive KV-cache swapping and live request migration \cite{wu2024fastdistributedinferenceserving,sun2024llumnix}, but the fundamental scheduling tradeoff remains.

The resulting problem is also distinct from classical queuing models with finite buffers. In a finite-buffer queue, the capacity constraint limits the amount of queued work that can be stored in the system; when the buffer is full, newly arriving jobs may be blocked or dropped, while the service process of a job already in service is unaffected. In contrast, the memory constraint considered here is coupled to the execution of the job itself. A request can progressively increase its memory footprint while receiving service, reach the finite memory capacity during execution, and subsequently experience a reduced service rate. Moreover, if the request is preempted, its accumulated memory state determines the additional work required to resume it. The capacity constraint therefore affects not only admission or occupancy, but also the service requirement of a job and the cost of scheduling decisions.

This distinction becomes increasingly important as inference is deployed across large-scale infrastructure as well as resource-constrained edge and on-device environments. At the edge and on end devices, available high-speed memory is much more limited, while moving state between memory hierarchies or between processing units can consume a larger fraction of the available computational and communication capacity. Consequently, memory pressure and state movement can have an even more pronounced effect on effective inference capacity in such environments. 

\vspace{-2mm}
\subsection{Contributions}
In this work, we introduce the $M/G^{g}/1_{\beta}^{C}$ queue, a new queuing model for inference workloads with dynamically growing memory footprints, finite memory capacity, memory-dependent service degradation, and nonzero preemption overhead (see Figure \ref{fig:system_model}). The model retains the basic structure of the classical $M/G/1$ queue while explicitly capturing the interactions between attained service, memory occupancy, service rate, and preemption cost. Specifically, a job with attained service $a$ occupies memory according to a nondecreasing function $g(a)$; the system has finite memory capacity $C$; service proceeds at a reduced rate $\beta$ once the memory requirement exceeds $C$; and preempting a job incurs additional work proportional to its accumulated memory state.

We use this model to study both the fundamental stability implications of memory constraints and the scheduling tradeoffs created by costly preemption. For non-preemptive policies, we characterize the stability region and show how memory-induced service degradation introduces additional load that depends on the portion of a job served beyond the memory threshold. For preemptive scheduling, we show that the classical Gittins-index policy, which minimizes mean response time in the standard $M/G/1$ setting, can incur substantial preemption overhead in the presence of memory-dependent switching costs. Motivated by this observation, we propose a Multi-Level-Priority-Threshold (MLPT) policy that limits preemption while retaining priority for jobs with small attained service. We derive the mean response time under MLPT and characterize the resulting threshold optimization problem, revealing how the tradeoff between scheduling gain and preemption overhead depends on memory capacity and the cost of restoring state.

\vspace{-2mm}
\subsection{Related Work}
\textbf{LLM Serving Systems.} Recent works have developed a range of system-level techniques to improve the throughput and latency of LLM serving, including continuous batching \cite{Yu_orca}, chunked prefill \cite{sarathi_serve}, and prefill-decode disaggregation \cite{zhong2024distserve}. Efficient memory management is similarly critical because autoregressive generation requires a KV cache that grows with sequence length; PagedAttention \cite{Kwon_vllm} improves GPU-memory utilization by reducing KV-cache fragmentation, while systems such as FastServe \cite{wu2024fastdistributedinferenceserving} and Llumnix \cite{sun2024llumnix} use preemption, KV-cache swapping, and live migration to mitigate latency and load-imbalance under memory pressure. Nevertheless, request sizes can vary substantially, making FCFS-like scheduling susceptible to head-of-line blocking \cite{zheng2026schedulingllminferenceuncertaintyaware,fu2024learningtorank}, while preemption and KV-cache management introduce state-dependent overhead and memory pressure can reduce the effective serving rate. These systems primarily address these challenges through system design and optimization; our work provides a complementary queuing-theoretic analysis of scheduling when memory footprints grow during service and both service rate and preemption cost depend on the attained service of a job.

\noindent \textbf{Queuing Analysis.} Queuing theory provides a rich framework for latency-oriented scheduling in \(M/G/1\) queues. Since LLM request sizes are generally unknown upon arrival, inference can naturally be viewed as scheduling with unknown job sizes, for which the Gittins index policy minimizes mean response time under costless preemption \cite{scully2021gittinspolicymg1,gittins2011mab}. However, optimal scheduling with preemption overhead remains an open problem \cite{scully2019openproblemmg1preemptiondelays}, and switching costs can invalidate conventional Gittins-type index policies \cite{asawa1996mabswitchingpenalties}. Closer to our work, different variations of the preemption overhead setting have been analyzed in the \(M/G/1\) queue over the years including class-based priority \cite{ramakrishna2026priorityschedulingmg1preemption}, and preemptive-repeats with both LIFO scheduling \cite{asmussen_preemptive-repeat_2017, bergquist2022mg1lifo} and age-dependent scheduling \cite{schrage1969mixed_priority, drekic_reducing_2001}. However, these works do not consider memory constraints or settings where the preemption overhead itself is age-dependent.

In standard \(M/G/1\) settings, it has been shown that the Gittins Index policy can be translated into simpler scheduling disciplines \cite{aalto_gittins_2009, aalto_optimal_2008, Scully_2020} with threshold-type structure. Our proposed MLPT policy is motivated by these threshold-type structures and the idea of preemption checkpointing, introduced in \cite{scully2021schedulenearoptimallyrealworldconstraints} to tackle costly preemptions. Classical methods like multilevel-feedback-queues (MLFQ) \cite{corbato1962experimental} provide a practical way to implement such threshold-based prioritization. Crucially, existing analysis and scheduling policies do not capture the coupling between attained service, growing memory footprint, service degradation, and preemption cost that arises in our model.

\noindent \textbf{Theoretical Modeling for LLM Inference.} The closest line of work related to ours is the growing body of literature on theoretical modeling for LLM inference. Recent studies incorporate increasingly detailed aspects of batching and KV-cache constraints, including request-level batching \cite{yang2025queueingtheoreticperspectivelowlatency}, iteration-level batching and token-budget constraints \cite{li2025throughputoptimalschedulingalgorithmsllm}, dynamically growing KV-cache occupancy under finite GPU memory \cite{nie2026queueingtheoreticframeworkstabilityanalysis}, and memory-constrained multi-stage execution with dynamic iteration times \cite{Bari_2025,ao2026optimizingllminferencefluidguided}. These works establish stability, throughput, or asymptotic latency guarantees, but do not provide an explicit steady-state characterization of mean response time under preemptive scheduling with memory-dependent overhead. Unlike existing LLM-specific models, our approach connects these memory effects directly to classical latency-oriented analysis and scheduling design. 

\vspace{-2mm}
\section{System Model}

\begin{figure}
    \centering
    \includegraphics[width=0.8\linewidth]{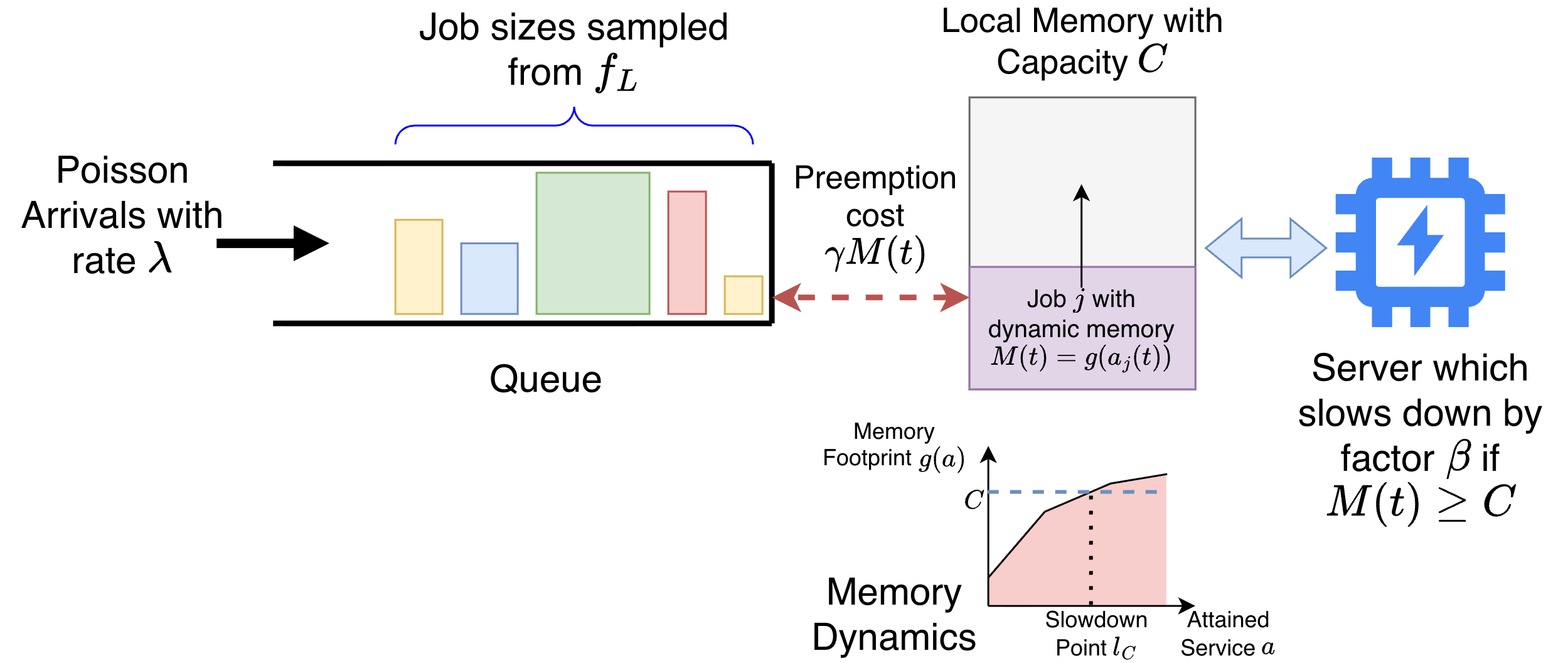}
    \vspace{-2mm}
    \caption{$M/G^{g}/1_{\beta}^{C}$ queuing model
    \vspace{-3mm}
    }
    \label{fig:system_model}
\end{figure}

We introduce a memory-constrained queuing model for preemptive scheduling with dynamically growing memory footprints, finite memory capacity, and nonzero preemption overhead, which we denote by $M/G^{g}/1_{\beta}^{C}$. The model retains the basic structure of a classical $M/G/1$ queue, but differs from it in three key aspects: 
(i) the memory footprint of a job grows as a function $g(\cdot)$ depending on its attained service; 
(ii) the server has a finite memory capacity $C$, and once this capacity is exceeded, the service rate slows down by a factor $\beta$ ; and 
(iii) preempting a job incurs an additional switching cost proportional to the memory it occupies.

Throughout, we assume that, until the memory constraint is reached, the server processes jobs at a normalized rate of $1$. Jobs arrive according to a Poisson process with rate $\lambda$. Let $L$ denote the intrinsic size of a job, i.e., the amount of service required to complete the job when it is processed continuously at the normalized service rate of one and incurs no preemption overhead. Job sizes are i.i.d. according to distribution $F_L(\ell)$ with density $f_L(\ell)$. We consider the classical zero-information setting, in which the scheduler knows the distribution $F_L$ but does not know the realization of $L$ for an individual job upon arrival.

The $M/G^{g}/1_{\beta}^{C}$ model is motivated by inference for transformer-based neural networks. In this setting, a job corresponds to an inference request, its size corresponds to the total amount of decode work, and its attained service corresponds to the amount of decoding completed so far. We focus on the decode phase and assume that at most one request is served at a time, so that the underlying queuing structure remains single-server. The model can be extended to $M/G^{g}/k_{\beta}^{C}$ queues to capture batching, when $k$ decode-phase requests are processed in parallel. 

\vspace{-2mm}
\subsection{Dynamic Memory Footprint}
In our $M/G^{g}/1_{\beta}^{C}$ model, a job occupies an amount of memory that increases with its attained service. Let $j$ denote the job currently in service, and let $a_{j}(t)$ denote the attained service of job $j$ at time $t$. We assume that at most one job is served at a time, and only the memory of the running job is retained in the server. Let \(g : \mathbb{R}_+ \to \mathbb{R}_+\) be a known monotonically non-decreasing function characterizing the memory requirement as a function of attained service. Then, the memory footprint of job $j$ running at time $t$ is
\begin{equation}
    M(t) = g\!\left(a_{j}(t)\right).
\end{equation}

For transformer-based inference, this state corresponds to the KV cache accumulated during autoregressive decoding. As decoding progresses, each newly generated token contributes additional KV states that must be retained for future attention computation. The function $g(\cdot)$ captures the general behavior of KV cache growth. When a standard full-attention transformer is used, the KV cache size grows approximately linearly with the generated sequence length, which corresponds to $g(a) = \alpha \mu a$, where the parameter $\alpha$ captures the memory growth associated with each generated token, and $\mu$ captures the rate of token generation. 
\vspace{-2mm}
\subsection{Memory Constraint and Slowdowns}
The finite memory capacity $C$ affects the rate at which a job is served. We normalize the service rate to one while the memory footprint of the running job remains within the capacity $C$. Once the memory requirement exceeds this capacity, the service rate is reduced to $\beta$, where $0<\beta<1$
\begin{equation}
\mu(t) =
\begin{cases}
1, & M(t) < C,\\
\beta, & M(t) \ge C,
\end{cases}
\qquad 0<\beta<1.
\end{equation}
Consider a job $i$ scheduled for service at time $t$. It has an intrinsic job size of $L_i$. We define the slowdown threshold point $\ell_C$ as the smallest amount of service needed for $M(t)$ to reach $C$. Under the simple linear KV cache growth model where $g(a) = \alpha a$, the job can receive $\frac{C}{\alpha}$ units of service before the memory capacity is exhausted, after which it is served at a slower service rate. Therefore, $\ell_C = \frac{C}\alpha{}$. The service time of job $i$ is given by
\begin{equation} \label{eq:service_time}
    S_i
    =
    L_i
    +
    \left(\frac{1}{\beta}-1\right)
    \left(
    L_i-\ell_C
    \right)^+
\end{equation}
where $(L-\ell_C)^+ = (L-\ell_C)\mathbf{1}\{L\geq \ell_C\}$.

Hence, the service time of a job is generally different from its intrinsic size $L$ as the remaining work is processed at the slower rate $\beta$ once the memory threshold is crossed. In LLM inference, this slowdown represents the loss of effective decoding throughput when the KV cache can no longer be maintained entirely in GPU memory. Practical serving systems may respond to memory pressure by swapping KV cache blocks to host memory, evicting cache state, or recomputing evicted state later. These mechanisms introduce additional data movement and computation, which reduce the effective rate at which decode work is completed. Rather than modeling these low-level operations explicitly, $\beta$ captures their aggregate effect on service speed.

\vspace{-2mm}
\subsection{Preemption and Switching Cost}
The $M/G^{g}/1_{\beta}^{C}$ model permits preemptive scheduling, but preemption is not assumed to be free. When a job $i$ is preempted after receiving $a_i(t)$ units of service and later resumed, its previously accumulated memory states must be restored. We model this restoration overhead as an additional service requirement proportional to the memory footprint at the time of preemption:
\begin{equation}
    c_i(t)=\gamma g(a_i(t)),
\end{equation}
where $\gamma\geq 0$ denotes the restoration cost per unit of accumulated state. We make one key assumption for analytical tractability.

\begin{assumption}
    Only one job is served at a time and only the memory of the running job is retained in the server. 
\end{assumption}
Therefore, preemption affects the queue through additional workload assigned to the preempted job, but it does not create persistent memory pressure for future jobs. This separation is important for ease of analysis: the memory-dependent slowdown experienced by a job is determined only by its own attained service while it is running, whereas preemption overhead is accounted for separately through the additional work $\gamma g(a_i(t))$. For LLM serving, preempting a request may require its KV cache to be removed from GPU memory and later restored through recomputation or data movement. The amount of state that must be reconstructed/transferred grows with the amount of KV cache generated, motivating a restoration cost proportional to attained service. 

The parameters $g(\cdot)$, $C$, $\beta$, and $\gamma$ therefore capture four distinct aspects of the serving system: dynamic memory growth with attained service, finite fast-memory capacity, service-rate degradation beyond that capacity, and preemption-induced restoration overhead, respectively. The resulting $M/G^{g}/1_{\beta}^{C}$ model preserves enough of the classical $M/G/1$ structure to support queuing analysis while incorporating the system effects that are central to memory-constrained LLM scheduling.

\section{The Stability Region of a Memory-Constrained Queue}  \label{sec:3}
\begin{figure}[htbp]
    \centering
    \vspace{-2mm}
    \includegraphics[width=0.55\linewidth]{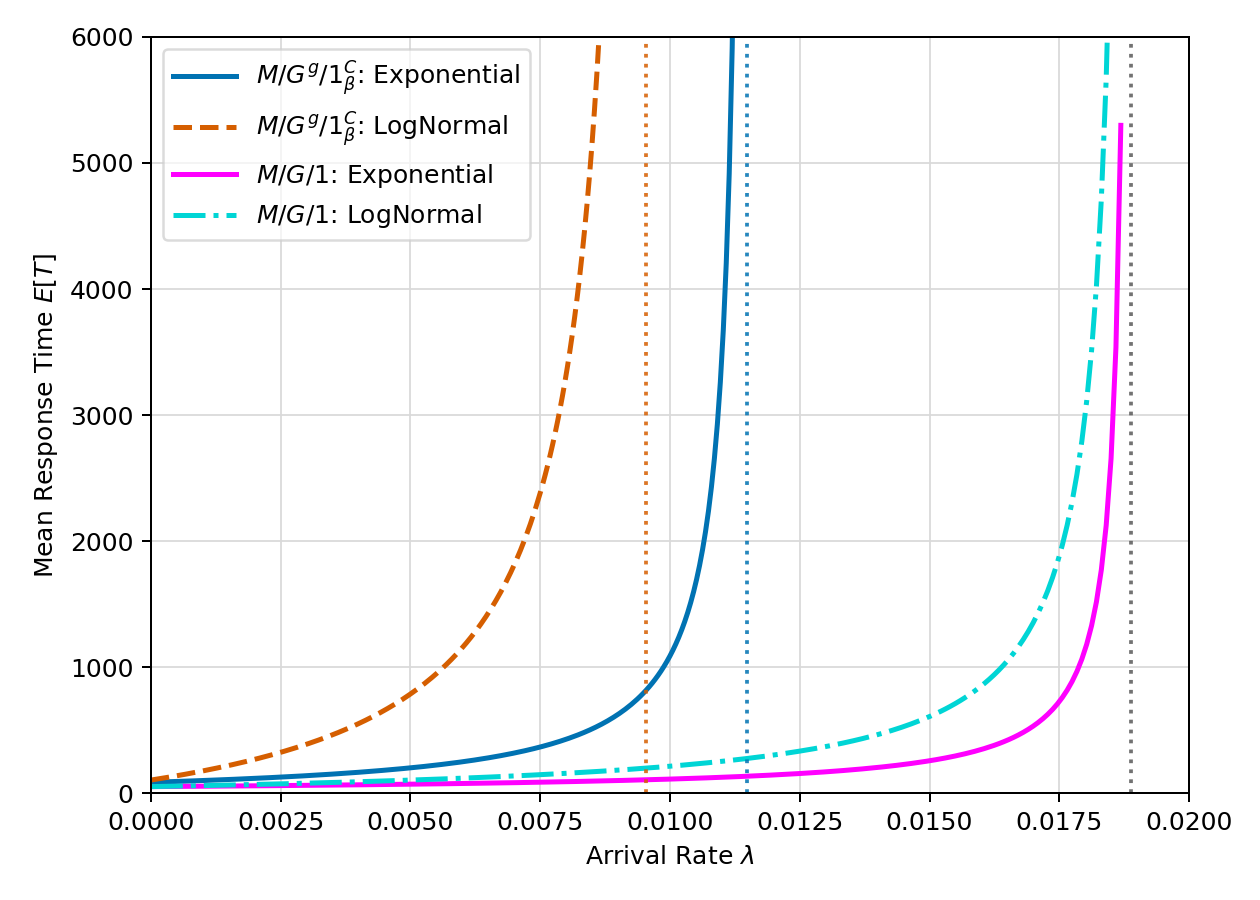}
    \caption{Stability region comparison between the memory-constrained $M/G^{g}/1_{\beta}^{C}$ queue for $g(a) = \alpha a$ and the standard M/G/1 queue for L$\sim$Exp(1/53) and L$\sim$LogNorm($\mu=3.1,\sigma=1.3$), where $\mathbb{E}[L]=53, C = 256, \alpha = 3.75,\beta =0.3$. The vertical dashed lines show the limits of stable arrival rates.
    \vspace{-3mm}
    }
    \label{fig:mem_constrained_queue}
\end{figure}
In this section, we derive the stability region of the proposed memory-constrained queue for non-preemptive policies and characterize how the dynamically growing memory footprint and slowdowns when the memory constraint is hit affect the stability of the $M/G^g/1_{\beta}^C$ queue. This analysis applies to all non-preemptive policies that do not make use of job sizes, as they have the same mean response time. For such policies, each job remains in
service until completion and therefore incurs no switching or restoration
overhead. Consequently, the preemption cost factor $\gamma$ does not affect the
service requirement. As we assume the job size to be i.i.d., taking the expectation on both sides of equation \eqref{eq:service_time} yields the first moment of the service time:
\begin{equation}
\label{eq:first_moment_fcfs}
\mathbb{E}[S]
=
\mathbb{E}[L]
+
\left(\frac{1}{\beta}-1\right)
\int_{\ell_C}^{\infty}
(\ell-\ell_C)f_L(\ell)\,d\ell.
\end{equation}

Since $S$ is a deterministic function of the i.i.d.\ intrinsic job sizes
$L$, the service times are also i.i.d. The
$M/G^{g}/1_{\beta}^{C}$ queue therefore reduces to an $M/G/1$
queue with service-time distribution induced by $f_L$, $\ell_C$ and $\beta$.

\begin{lemma}
\label{lem:fcfs_stability}
Assume $\mathbb{E}[S]<\infty$. The $M/G^{g}/1_{\beta}^{C}$ queue
under non-preemptive non-size-aware policy is stable if and only if
\begin{equation}
    \rho_{0}
    \triangleq
    \lambda\mathbb{E}[S] < 1,
\end{equation}
or equivalently,
\begin{equation}\label{eq:non_preempt_load}
\lambda
\left[
\mathbb{E}[L]
+
\left(\frac{1}{\beta}-1\right)\mathbb{E}[(L-\ell_C)^+]
\right]
<1.
\end{equation}
\end{lemma}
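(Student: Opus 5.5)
The plan is to reduce the $M/G^{g}/1_{\beta}^{C}$ queue under any non-preemptive, non-size-aware policy to a standard $M/G/1$ queue, and then invoke the classical stability criterion. Under such a policy every job, once started, runs to completion without interruption. It therefore never incurs restoration work $\gamma g(a_i(t))$, and its total occupation of the server is exactly the quantity $S_i$ in \eqref{eq:service_time}. By Assumption~1, the memory state (and hence the slowdown) of the running job depends only on its own attained service, not on the history of other jobs. So $S_i$ is a deterministic measurable function of $L_i$ alone. Since the $L_i$ are i.i.d.\ and independent of the Poisson arrival process, the $S_i$ are i.i.d.\ with a common distribution, independent of arrivals.

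The first key step is to observe that the workload process (total remaining service time in system) evolves identically for every work-conserving non-preemptive policy. It jumps by $S_i$ at the arrival of job $i$ and decreases at unit rate whenever it is positive. We measure work in units of $S$ rather than $L$, which absorbs the rate change to $\beta$. The system is thus an $M/G/1$ queue with arrival rate $\lambda$ and service distribution equal to the law of $S$, and the scheduling order does not affect busy periods or the regeneration structure. The standard result then applies: the queue is positive recurrent (stable) iff $\rho_0=\lambda\mathbb{E}[S]<1$. For sufficiency, the busy period has finite mean $\mathbb{E}[S]/(1-\rho_0)$ via the branching/Lindley argument, so the regenerative cycles have finite mean. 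For necessity, when $\rho_0\ge 1$ the workload is a random walk with nonnegative drift $\lambda\mathbb{E}[S]-1\ge0$ between idle periods. It is transient for $>1$ and null recurrent for $=1$, so no stationary distribution exists.

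The second step is the equivalence with \eqref{eq:non_preempt_load}. Take expectations in \eqref{eq:service_time}, using linearity and $\mathbb{E}[(L-\ell_C)^+]=\int_{\ell_C}^\infty(\ell-\ell_C)f_L(\ell)\,d\ell$, exactly as in \eqref{eq:first_moment_fcfs}. The hypothesis $\mathbb{E}[S]<\infty$ guarantees this is finite, and also that $\mathbb{E}[L]<\infty$ since $S\ge L$.

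I expect the main obstacle to be the rigorous justification that the reduction is legitimate for the whole class of policies. Work-conservation must be assumed, meaning the server never idles with jobs present; if it were not, stability could fail even with $\rho_0<1$. We also need that non-size-awareness plus non-preemption ensures the service requirements assigned are not correlated with the order of service in a way that breaks the i.i.d.\ $M/G/1$ structure. I would handle this by arguing directly on the workload process, which is policy-invariant among work-conserving non-preemptive disciplines, and stating work-conservation explicitly as an implicit assumption of the lemma. The boundary case $\rho_0=1$ (null recurrence) I would treat by citing the classical $M/G/1$ result rather than reproving it.
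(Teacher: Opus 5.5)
Your proposal is correct and takes the paper's route: the paper also notes that non-preemptive service incurs no restoration overhead and that $S$ is a deterministic function of the i.i.d.\ $L$. It concludes that the system is an ordinary $M/G/1$ queue with service law induced by $f_L$, $\ell_C$, $\beta$, and reads off $\lambda\mathbb{E}[S]<1$ after taking expectations in the service-time formula. Your workload and regeneration argument, and the explicit work-conservation caveat, make rigorous what the paper leaves implicit; for stability alone the non-size-aware restriction is not even needed, since the workload process is the same under every work-conserving non-preemptive discipline.
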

The second term in \eqref{eq:non_preempt_load} quantifies the additional load induced by the memory constraint. Specifically,
\[
\mathbb{E}\left[\left(L-\ell_C\right)^+\right]
=
\int_{\ell_C}^{\infty}
\left(\ell-\ell_C\right)
f_L(\ell)\,d\ell
\]
is the expected portion of a job that must be served after its memory requirement reaches the capacity \(C\). Since service beyond this threshold proceeds at the reduced rate \(\beta\), this excess work contributes an additional load of
$
\lambda\left(\frac{1}{\beta}-1\right)
\mathbb{E}\left[\left(L-\ell_C\right)^+\right].
$
Since $g(\cdot)$ is a monotonically non-decreasing function of the attained service, increasing \(C\) delays the onset of memory-induced slowdown, while increasing \(\beta\) reduces the slowdown penalty; both enlarge the stability region.

Figure \ref{fig:mem_constrained_queue} compares the mean response time, computed using the Pollaczek--Khinchine formula, as a function of the arrival rate for exponential and log-normal job-size distributions with the same mean intrinsic job size $L$, in the memory-constrained $M/G^{g}/1_{\beta}^{C}$ and the standard $M/G/1$ queue without memory constraints when $g(a) = \alpha a$ with $\alpha$ being a constant. For the parameters considered, the memory constraint substantially reduces the maximum stable arrival rate. Moreover, the stability region of a memory-constrained queue depends not only on the mean job size, as in a standard $M/G/1$ queue, but also on the tail probability distribution via the expected remaining job size beyond the slowdown threshold, $\mathbb{E}\left[\left(L-\frac{C}{\alpha}\right)^+\right]$. Consequently, distributions with greater tail mass beyond $\frac{C}{\alpha}$, such as the log-normal distribution considered in Figure \ref{fig:mem_constrained_queue}, incur a larger memory-induced load and therefore a smaller stability region in a memory-constrained queue.

Lemma \ref{lem:fcfs_stability} characterizes the stability region of non-preemptive, size-unaware policies. Nevertheless, non-preemptive policies can suffer from severe head-of-line blocking, as a long request may delay many shorter requests. This motivates preemptive policies that interrupt jobs with large attained service and give priority to newly arriving jobs that may complete sooner. In our model, however, preemption incurs a positive overhead. Consequently, any policy that preempts with positive probability increases the effective mean service requirement and therefore has a strictly smaller stability region than its non-preemptive counterpart. This loss in stability does not prevent preemption from being beneficial: at light or moderate loads, the reduction in head-of-line blocking can outweigh the additional preemption cost, yielding a smaller mean response time.

\vspace{-2mm}
\section{Scheduling Design}
\subsection{Motivation from Age-Based Scheduling}

Under the zero-information setting considered in this work, the scheduler
does not know the size of an individual job upon arrival. The attained
service of a job therefore provides an important source of information
for scheduling decisions. Classical results for the $M/G/1$ queue show
that, among work-conserving preemptive policies, the Gittins index
policy minimizes mean response time
\cite{gittins2011mab,scully2021gittinspolicymg1}. The Gittins index of
a job depends on its attained service and uses the conditional
distribution of its remaining size to determine its priority. Following the notations and definitions used in \cite{aalto_gittins_2009}, the efficiency function $J(a,\Delta)$ for any $\Delta >0$ is defined as,
\[
J(a,\Delta) = \frac{P(S-a\leq \Delta|S>a)}{\mathbb{E}[\min\{S-a,\Delta\}|S>a]},
\]
where $a$ denotes the attained service and $\Delta$ is the assigned quantum of service. $J(a,\Delta)$ can be interpreted as the ratio between the probability that a job with attained service $a$ will complete with a service quantum of $\Delta$ and the expected service time a job with attained service $a$ and service quantum $\Delta$ will require from the server. The Gittins index $G(a)$, $a\geq0$ is defined as,
\[
G(a) = \sup_{\Delta \geq 0}J(a,\Delta).
\]
For any $a\geq 0$, the optimum quantum of service $\Delta^*(a)$ is defined as,
\[
\Delta^*(a) = \sup \{\Delta \geq 0|G(a)=J(a,\Delta)\}.
\]
Therefore, for each job with attained service $a$, the Gittins index $G(a)$ returns its highest efficiency ratio achieved by the optimum quantum of service $\Delta^*(a)$.

According to Theorem 2 and Definition 2 in \cite{aalto_gittins_2009}, the Gittins index policy at every decision epoch picks the job with the highest index value $G(a)$ and assigns it a service quantum $\Delta^*(a)$. The next decision epoch takes place once the previously selected job has received $\Delta^*(a)$ units of service, or when the previously selected job completes its service, or when a new job arrives at the queue. 

Note that unlike the classical model, preemption is not work-conserving in the $M/G^{g}/1_{\beta}^{C}$ queue and the Gittins index policy might cause the queue to be unstable under special assumptions on the service time distribution. 

Let's define the hazard rate of a service time distribution as
$
h(x) = \frac{f(x)}{\bar F(x)},
$
where $\bar F(x) = 1-F(x)$ is the complementary cumulative distribution function of the service time distribution. A service time distribution belongs to the class Decreasing Hazard Rate (DHR) if $h(x) \geq h(y)$ whenever $x \leq y$. It has a strictly decreasing hazard rate if the inequality is strict. 

\begin{lemma}[Instability of the classical Gittins rule under costly preemption]
\label{lem:gittins_instability}
Assume that the service time distribution has a strictly decreasing hazard rate. If a new job with size greater than $a>0$ arrives when an unfinished job is in service with attained service $a$, then the classical Gittins-index policy induces infinitely many preemptions in the $M/G^{g}/1_{\beta}^{C}$ queue and causes the queue to be unstable.

\end{lemma}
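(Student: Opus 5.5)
Under a strictly decreasing hazard rate the Gittins index $G(a)$ is strictly decreasing in attained service, so the classical rule reduces to least-attained-service (LAS/FB) scheduling. The plan is to exploit this monotonicity: show that the rule switches between jobs at arbitrarily small attained-service increments, while the restoration cost $\gamma g(a)>0$ added at each switch never shrinks.

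\textbf{Step 1 (index monotonicity).} Under DHR it is classical that $\Delta^*(a)=\infty$, i.e.\ the supremum of $J(a,\Delta)$ is approached as $\Delta\to\infty$. Then $G(a)=1/\mathbb{E}[S-a\mid S>a]$, the reciprocal of the mean residual life; see \cite{aalto_gittins_2009}. Strictly decreasing hazard rate makes the mean residual life strictly increasing, so $G$ is strictly decreasing. Hence, at every decision epoch, the job with the smallest attained service has strictly highest priority.

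\textbf{Step 2 (arrival at the critical moment).} When the new job arrives, its attained service is $0<a$. It therefore preempts the job in service, which incurs restoration work $\gamma g(a)$ for the old job.

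\textbf{Step 3 (the new job catches up).} Since the new job has size greater than $a$, it is served until its attained service reaches $a$. At that point the two jobs have equal attained service and tie. Any $\epsilon$ of further service to either job strictly lowers its index below the other's. The index rule therefore switches back as soon as the served job's attained service exceeds the other's.

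\textbf{Step 4 (processor sharing in the limit).} In the limit this is processor sharing between two jobs via infinitely many preemptions: in any interval of positive service, the number of switches is unbounded as the decision granularity goes to zero. Each switch at attained service $a'\ge a$ adds overhead at least $\gamma g(a)$, which is bounded away from zero provided $g(a)>0$ (interpreting positivity of memory at positive attained service) and $\gamma>0$.

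\textbf{Step 5 (instability).} The added work is therefore infinite, so the two jobs never complete. The workload diverges and the queue is unstable for every $\lambda>0$: such an arrival event occurs with positive probability, since Poisson arrivals are independent of attained service and $P(L>a)>0$. The same argument repeats whenever the configuration recurs.

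The main obstacle is Step 4, making ``infinitely many preemptions'' rigorous. The Gittins rule in the cost-free model is only well defined as a limit of quantum policies at ties. I would handle this by fixing any tie-breaking rule that respects strict index ordering. A two-job induction then shows that after each switch the lagging job regains strictly higher priority after finite service, with the lag bounded by the discretization. Hence infinitely many switches occur before either job gains more than any fixed amount of service beyond $a$, and the total restoration cost exceeds any bound.
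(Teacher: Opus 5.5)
Your architecture matches the paper's proof: a strictly decreasing index, preemption at the arrival, the newcomer catching up to a tie at $a$, and a switch after every further increment of service, each costing at least $c(a)=\gamma g(a)>0$, so the overhead is unbounded. Step 1, however, is wrong, and the error matters. Under DHR the efficiency function $J(a,\Delta)$ is a $\overline F$-weighted average of $h$ over $[a,a+\Delta]$, so it is nonincreasing in $\Delta$, strictly so under strict DHR (Proposition 1 of \cite{aalto_gittins_2009}). The supremum is therefore approached as $\Delta\to 0$, giving $G(a)=J(a,0^+)=h(a)$ and $\Delta^*(a)=0$ (Proposition 5 there). The quantity $1/\mathbb{E}[S-a\mid S>a]=J(a,\infty)$ is the \emph{infimum} of $J(a,\cdot)$ under DHR. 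The pair $G(a)=1/\mathbb{E}[S-a\mid S>a]$, $\Delta^*(a)=\infty$ describes the increasing-hazard-rate case, where the Gittins policy serves jobs non-preemptively.

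This matters because of how the paper defines the policy. Gittins assigns the selected job the quantum $\Delta^*(a)$, and the next decision epoch occurs only when that quantum is exhausted, the job completes, or a new job arrives. With your $\Delta^*(0)=\infty$, the arriving job would run until it completes or another job arrives. There would be no decision epoch when it reaches attained service $a$, so Step 3, and hence the infinite switching in Step 4, would not follow.

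Step 1 is also internally inconsistent. A mediant argument (split $J(a,\Delta^*)$ at $a+t$ and use $J(a,t)\le G(a)$) shows the index never falls below its initial value during the optimal quantum: $G(a+t)\ge G(a)$ for $0<t<\Delta^*(a)$. So $\Delta^*(a)=\infty$ cannot coexist with a strictly decreasing $G$. Your monotonicity conclusion happens to survive, since strict DHR does make the mean residual life strictly increasing. But the fact that forces re-evaluation after every infinitesimal amount of service is $\Delta^*(a)=0$.

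To fix this, replace Step 1 with $G=h$ strictly decreasing and $\Delta^*\equiv 0$. The remaining steps then coincide with the paper's argument. Your explicit requirement $\gamma g(a)>0$ and your tie-breaking discussion are slightly more careful than the paper's treatment.
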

\begin{proof} 
According to Proposition 5 in \cite{aalto_gittins_2009}, the Gittins index for a DHR service-time distribution is given by
$
    G(a)=h(a).
$
Since the hazard rate is assumed to be strictly
decreasing, $G(a)$ is strictly decreasing in the attained service $a$.
Moreover, under strict DHR, the efficiency function $J(a,\Delta)$ is
strictly decreasing in the service quantum $\Delta$ according to Proposition 1 in \cite{aalto_gittins_2009}, so the optimal
Gittins service quantum satisfies $
    \Delta^*(a)=0.$
    
Suppose that a job $i$ is in service with attained service $a>0$ when a
new job $j$ arrives. Since
$
    G(0)>G(a),
$
job $j$ immediately receives priority. On the event that $j$ does not
complete before receiving $a$ units of service, the two unfinished jobs
eventually satisfy
$
    a_i=a_j=a.
$
At this instant their Gittins indices are equal.

Now suppose that, because the server can serve only one job at a time,
job $i$ is selected. After receiving any positive amount
$\epsilon>0$ of additional service, its attained service becomes
$a+\epsilon$, whereas job $j$ remains at attained service $a$. Since
$G(\cdot)$ is strictly decreasing,
$
    G(a+\epsilon)<G(a),
$
and the Gittins rule therefore requires job $i$ to be preempted in favor
of job $j$. The same argument applies symmetrically after job $j$
receives any positive amount of service. Hence no job can receive a
strictly positive service quantum while preserving the Gittins priority
ordering. An exact single-job implementation therefore requires service
quanta converging to zero and \textit{infinitely} many switches between the tied
jobs.

In the classical costless-preemption model, this limiting behavior is
interpreted as processor sharing between jobs tied at the highest
Gittins index. In the present model, however, every preemption incurs a positive overhead; when the service quantum approaches 0, the limiting behavior cannot be interpreted as processor sharing between jobs at a reduced rate. Let
$c(a)>0$ denote the preemption cost at attained service $a$ (for
example, $c(a)=\gamma g(a)$ under proportional recomputation cost). If
$N$ preemptions occur after the jobs reach attained service $a$, the
accumulated overhead is at least
$
    N c(a).
$
Since an exact implementation requires $N\to\infty$, an unbounded amount of server work is spent on preemption overhead
before any positive amount of useful service can be completed beyond the tie state. Therefore, once such a state is reached, the involved jobs have infinite
response time. If this state is reachable with positive probability,
the system admits no stable stationary regime under a direct
implementation of the classical Gittins-index policy.
\end{proof}

Previous work \cite{aalto_optimal_2008,Scully_2020} has shown that multi-level scheduling policies with age-based priority thresholds can perform within a constant factor of the Gittins index policy under costless preemption. Therefore, to limit the preemption overhead, it is natural to use a multi-level policy with age-based thresholds, and serve the job in each level non-preemptively and preempt the job only when it is demoted to a lower priority level after reaching the attained-service thresholds. This preserves the benefit of prioritizing jobs that are more likely to finish to mitigate head-of-line blocking, while limiting the accumulated preemption overhead.

\vspace{-2mm}
\subsection{Mean Response Time of the MLPT-$(\theta,k)$ Policy}
We propose the following Multi-Level Priority Thresholds (MLPT)-($\theta,k$) policy to balance the trade-off between head-of-line blocking and preemption overhead through limited preemption. The structure of the MLPT policy is strongly motivated by classical MLFQ scheduling \cite{corbato1962experimental}, the discretized foreground-background (FB) policy analyzed in \cite{scully2017SOAP}, and preemption checkpointing as discussed in \cite{scully2021schedulenearoptimallyrealworldconstraints}.
\begin{algorithm}[t]
\caption{MLPT-$(\theta,k)$ Scheduling Policy}
\label{alg:mlpt}
\begin{algorithmic}[1]

\State Maintain $k$ priority queues $Q_1,\ldots,Q_k$ with
\[
    Q_1 \succ Q_2 \succ \cdots \succ Q_k .
\]

\State Upon arrival, place job $j$ in $Q_1$

\While{the system is nonempty}
    \State Let
    \[
        i_m= \min\{i : Q_i \neq \emptyset\}
    \]
    \State Select the head-of-line job $j$ from $Q_{i_m}$
    \State Serve job $j$ \textbf{nonpreemptively}

    \If{$i_m < k$}
        \State Continue service until either:
        \begin{enumerate}
            \item job $j$ completes, or
            \item its attained service reaches $i_m\theta$
        \end{enumerate}

        \If{job $j$ completes}
            \State Remove job $j$ from the system
        \Else
            \State Preempt job $j$ at attained service
            \[
                a_j = i_m\theta
            \]
            \State Demote job $j$ to $Q_{i_m+1}$
            \State Record preemption cost
            \[
                c_{i_m}
                =
                \gamma g(i_m\theta)
            \]
            \State When job $j$ is next scheduled, incur overhead
            $c_{i_m}$
        \EndIf

    \Else
        \State Serve job $j$ until completion
        \State Remove job $j$ from the system
    \EndIf
\EndWhile

\end{algorithmic}
\end{algorithm}
The MLPT-$(\theta,k)$ policy consists of
$k$ priority levels, denoted by
$
    Q_1,Q_2,\ldots,Q_k,
$
with strict priority
$
    Q_1 \succ Q_2 \succ \cdots \succ Q_k.$
    
Jobs in the same level are served in an FCFS order. A newly arriving job enters $Q_1$. For $i=1,\ldots,k-1$, a job in
$Q_i$ receives at most $\theta$ units of service. If it does not
complete within this quantum, it is preempted at attained service
$i\theta$ and demoted to $Q_{i+1}$. Jobs reaching $Q_k$ are served
until completion. Therefore, a job can be preempted at most $k-1$
times when $k$ is finite. We use $k=\infty$ to denote the policy that allows a job to be demoted infinitely. 

We assume that the attained service of a job does not include the preemption overheads needed to reschedule the job, and that a preempted job is always rescheduled successfully. A job in $Q_i$ is rescheduled only if there are no higher-priority jobs in the system, but once it resumes service, it is non-preemptible until the allocated $\theta$ units of service are exhausted. 

\begin{remark}
    In the MLPT-$(\theta,k)$ policy, we assume that when a job reaches $\theta$ units of service, it is preempted proactively, independently of whether there are jobs in higher-priority levels in the system. In other words, even if the system is empty, the job is preempted, demoted to $Q_{i+1}$, and rescheduled immediately, which always incurs a preemption overhead.
\end{remark}

 This simplifies the theoretical analysis, as the preemption overhead is entirely determined by the original service requirement of the job and independent of the arrival process. While it is natural to expect a practical scheduler to only preempt the job in service when there are new arrivals, such a scheduler is equivalent to our proposed policy if we let the preemption overhead be zero if $Q_1,...Q_{i+1}$ are all empty when a job in $Q_i$ attains $\theta$ units of service. The fraction of time that the preemption overhead is zero is small if we consider reasonably large values of $\theta$ and a moderately large system load. Therefore, our proposed policy has a mean response time close to the policy with arrival-dependent preemptions while remaining analytically tractable. 

The complete specification of the scheduling policy is detailed in Algorithm \ref{alg:mlpt}. We will next derive the stability condition and the mean response time of the MLPT-$(\theta,k)$ policy. Let
$
    \overline F_S(x)=\Pr(S>x)
$
denote the complementary CDF of the service time distribution without preemption overheads. We denote the preemption overhead is for a job with attained service $a$ as $
    c(a)=\gamma g(a).
$
So, under the MLPT-$(\theta,k)$ policy, the preemption cost incurred
at the $r$th threshold is
\begin{equation}
\label{eq:mlpt_preemption_cost}
    c_r
    =
    \gamma g(r\theta),
    \qquad r=1,\ldots,k-1.
\end{equation}
Therefore, a job of size $S$ crosses threshold $r\theta$ if and only if
$S>r\theta$. Hence, its effective service requirement under
MLPT-$(\theta,k)$ is
\begin{equation}
\label{eq:mlpt_effective_service}
    S_k^{\mathrm{eff}}
    =
    S
    +
    \gamma
    \sum_{r=1}^{k-1}
    g(r\theta)
    \mathbf{1}\{S>r\theta\}.
\end{equation}
Therefore,
\begin{equation}
\label{eq:mlpt_mean_effective_service}
    \mathbb{E}\!\left[S_k^{\mathrm{eff}}\right]
    =
    \mathbb{E}[S]
    +
    \gamma
    \sum_{r=1}^{k-1}
    g(r\theta)
    \overline F_S(r\theta).
\end{equation}
The effective system load is thus
\begin{equation}
\label{eq:mlpt_load}
    \rho_k^{\mathrm{eff}} = \lambda\mathbb{E}\!\left[S_k^{\mathrm{eff}}\right]
    =
    \lambda
    \left(
        \mathbb{E}[S]
        +
        \gamma
        \sum_{r=1}^{k-1}
        g(r\theta)
        \overline F_S(r\theta)
    \right).
\end{equation}
\begin{lemma}[Stability Condition of the MLPT-($\theta,k$) policy]\label{lemma:MLPT_stability}
    The $M/G^{g}/1_{\beta}^{C}$ queue
under the MLPT-($\theta,k$) policy is stable if and only if
\begin{equation}
    \rho_k^{\mathrm{eff}}
    =
    \lambda
    \left(
        \mathbb{E}[S]
        +
        \gamma
        \sum_{r=1}^{k-1}
        g(r\theta)
        \overline F_S(r\theta)
    \right)< 1.
\end{equation}
\end{lemma}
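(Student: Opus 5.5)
The plan is to reduce the MLPT-$(\theta,k)$ system to a work-conserving single-server queue whose jobs carry i.i.d. effective workloads $S_k^{\mathrm{eff}}$ as in \eqref{eq:mlpt_effective_service}, and then apply the standard stability criterion for such queues, as was done for Lemma~\ref{lem:fcfs_stability}. The key feature is the Remark: every job that crosses threshold $r\theta$ is preempted there, whether or not any other job is present. Because of this, the total overhead a job incurs depends only on its own service requirement $S$. In particular, it does not depend on the arrival process or the queue state. The memory-induced slowdown is already contained in $S$ through \eqref{eq:service_time}, because by the standing assumption only the running job's memory is retained, so the slowdown a job experiences depends only on its own attained service. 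Hence $S_k^{\mathrm{eff}}$ is a deterministic function of the intrinsic size $L$, and the effective workloads are i.i.d. across jobs and independent of the Poisson arrivals.

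The key steps, in order, are as follows.

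\textbf{Step 1: the workload process.} I will show that the workload process $V(t)$, defined as the total remaining effective work in the system (remaining intrinsic work, slowed by $1/\beta$ where applicable, plus any pending restoration overheads $c_r$), has two properties. It jumps by $S_k^{\mathrm{eff}}$ at each arrival, and it decreases at rate one whenever the system is nonempty. The second property holds because MLPT never idles while jobs are present, and the server's time is always spent either on useful work or on overhead that has already been counted in $S_k^{\mathrm{eff}}$.

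\textbf{Step 2: identification with an $M/G/1$ queue.} This makes $V(t)$ identical in law to the workload process of an $M/G/1$ FCFS queue with service distribution $S_k^{\mathrm{eff}}$. The busy periods therefore coincide.

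\textbf{Step 3: the stability criterion.} Stability of the $M/G/1$ workload, in the sense of finite mean busy period and positive recurrence of the regenerative process, holds if and only if $\lambda\mathbb{E}[S_k^{\mathrm{eff}}]<1$. Substituting \eqref{eq:mlpt_mean_effective_service} then gives exactly the condition \eqref{eq:mlpt_load}.

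\textbf{Step 4: necessity.} If $\rho_k^{\mathrm{eff}}\ge 1$, the cumulative arriving effective work grows at least as fast as the server can process it. The standard argument gives a null-recurrent or transient workload, so the system is unstable. The linear-growth case $\rho=1$ is handled as usual.

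The main obstacle is the justification in Step 1 that the total overhead paid by each job is exactly $\gamma\sum_{r=1}^{k-1}g(r\theta)\mathbf{1}\{S>r\theta\}$. This is where the analysis could break if preemptions depended on the queue state. I will argue it directly from Algorithm~\ref{alg:mlpt} and the Remark. Within a level, service is nonpreemptive until either completion or the next threshold, so a job is interrupted only at the deterministic attained-service points $r\theta$ with $r<k$. Each such interruption charges $c_r$ exactly once, and it is paid when the job is next scheduled.

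A subtlety is that thresholds are defined in terms of service time $S$, which already includes the slowdown, rather than intrinsic work $L$. I will keep the convention of \eqref{eq:mlpt_effective_service}, under which $S>r\theta$ is the crossing event. Finally, I will note that the number of preemptions per job is at most $k-1$, or is a.s. finite with finite expected overhead when $k=\infty$, provided the sum in \eqref{eq:mlpt_mean_effective_service} converges. If it diverges, $\mathbb{E}[S_k^{\mathrm{eff}}]=\infty$ and the system is unstable, which is consistent with the statement.
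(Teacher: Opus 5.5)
Your proposal is correct and takes the same route as the paper. Remark 1 makes each job's total overhead a deterministic function of its own $S$, so the effective requirements $S_k^{\mathrm{eff}}$ are i.i.d.\ and independent of the arrivals, and the non-idling MLPT system reduces, as in the non-preemptive stability lemma, to an $M/G/1$ queue with load $\rho_k^{\mathrm{eff}}=\lambda\mathbb{E}[S_k^{\mathrm{eff}}]$; the paper leaves this reduction implicit after computing $\mathbb{E}[S_k^{\mathrm{eff}}]$, and your workload-process argument (which in fact matches $V(t)$ pathwise, not just in law) makes it explicit.
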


To derive the mean response time, we adopt an approach similar to that proposed in \cite{wolff1970timesharingqueue}. A job that completes at level $j<k$ has a service requirement
\[
    (j-1)\theta < S \le j\theta,
\]
whereas a job reaches the last level $Q_k$ if $
    S>(k-1)\theta.$
We denote such a job as a $j$-job, where $j\leq k.$
Define
\begin{equation}\label{eq:p_j}
    p_j
    =\begin{cases}
        \Pr\bigl((j-1)\theta<S\le j\theta\bigr),
    &\qquad
    j=1,\ldots,k-1,\\
    \Pr\bigl(S>(k-1)\theta\bigr),&\qquad j = k.
    \end{cases}
\end{equation}
Let $d_j$ be the total mean waiting time of a job that completes in level $j$. Therefore, the total mean response time of a job that completes in level $j$ is given by
\begin{equation}
\label{eq:Tj}
    \mathbb{E}[T\mid J=j]
    =
    d_j
    +
    \mathbb{E}[S\mid J=j]
    +
    \sum_{r=1}^{j-1}c_r.
\end{equation}

To obtain $d_j$, consider a transformed system in which the first $j$
levels are merged into a single level, denoted by $Q_J$. The maximum
amount of service received by jobs in $Q_J$ is
$j\theta$ for $j<k$, whereas for $j=k$ the entire job is served in $Q_J$ until it completes. Let $d_J$ denote the mean waiting time in $Q_J$. To derive $d_J$, we consider the total mean service time of the waiting jobs in $Q_J$ and the remaining service time of the job in service when a tagged $j$-job arrives at $Q_J$, including the preemption overhead. This yields the mean waiting time of a $j$-job due to the jobs already in the system when the tagged job arrives. $d_j$ differs from $d_J$ only by the amount of service time required for jobs that arrive after the tagged job arrives at the system and before it starts service in level $j$ to attain $(j-1)\theta$ units of service or finish service. Next, we will derive $d_J$ first and then use it to derive $d_j$ using a busy period argument.
\begin{lemma}[Mean waiting time in $Q_J$] The mean waiting time of a job in the merged level $Q_J$ in the transformed system is 
\begin{equation}
\label{eq:dJ}
    d_J
    =
    \frac{
        \dfrac{\lambda}{2}
        \mathbb{E}[(S_k^{\mathrm{eff}})^2]
        -
        O_J
    }{
        1-\lambda b_J
    },
\end{equation}
where \begin{equation}
\label{eq:OJ}
\begin{split}
    O_J
    ={}&
    \lambda \overline F_S(j\theta)
    \left(
        j\theta+\sum_{r=1}^{j-1}c_r
    \right)
    v_{j+1}+
    \lambda
    \sum_{r=j+1}^{k-1}
    \overline F_S(r\theta)
    \left(
        \theta+c_{r-1}
    \right)
    v_{r+1},
\end{split}
\end{equation}
\begin{equation}
\label{eq:bJ}
    b_J
    = \begin{cases}
        \mathbb{E}[\min\{S,j\theta\}]
    +
    \sum_{r=1}^{j-1}
    c_r\overline F_S(r\theta) \quad j<k,\\
    \mathbb{E}[S_k^{\mathrm{eff}}] \quad j=k,
    \end{cases}
\end{equation}
\begin{equation}
\label{eq:vi}
    v_i
    =
    \mathbb{E}\left[
        S-(i-1)\theta
        +
        \sum_{r=i-1}^{k-1}
        c_r\mathbf{1}\{S>r\theta\}
        \,\middle|\,
        S>(i-1)\theta
    \right],
    \qquad i=2,\ldots,k,
\end{equation}
and we set $c_0=0$.
\end{lemma}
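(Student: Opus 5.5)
I will derive $d_J$ with a tagged-job mean-value (work-decomposition) argument in the style of \cite{wolff1970timesharingqueue}, applied to the transformed system. The tagged job arrives at $Q_J$ and waits until its first service in $Q_J$. Since $Q_J$ is the highest-priority level, jobs arriving after the tagged job do not matter for this quantity (they are handled later in passing from $d_J$ to $d_j$). So $d_J$ is the expected amount of work present at the arrival instant that must be processed before the tagged job starts in $Q_J$. I will split this work into two parts.

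\emph{Part (a).} This is the residual work of the job in service, including any overhead currently being incurred. By PASTA and the standard renewal-reward argument on the effective service requirements $S_k^{\mathrm{eff}}$ from \eqref{eq:mlpt_effective_service}, its mean is $\frac{\lambda}{2}\mathbb{E}[(S_k^{\mathrm{eff}})^2]$. This term is the same for every $j$.

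\emph{Part (b).} This is the work of jobs waiting in $Q_J$ ahead of the tagged job. Each such job needs, on average, the $Q_J$-portion of service $b_J$: up to $j\theta$ of real service plus the overheads $c_r$ for $r<j$ incurred when it crosses $r\theta$. This gives \eqref{eq:bJ}, and $b_J=\mathbb{E}[S_k^{\mathrm{eff}}]$ when $j=k$. By Little's law, the mean number of jobs in $Q_J$ is $\lambda d_J$, so this part contributes $\lambda b_J d_J$.

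\emph{Correction term $O_J$.} Part (a) overcounts whenever the job in service is actually at a lower level. In that case it is not served to completion before the tagged job. It is also overcounted when it is a $Q_J$ job: only its $Q_J$-portion precedes the tagged job, not its future lower-level work. So I will subtract the expected residual work that the job in service will receive \emph{after} leaving its current level, i.e.\ work belonging to levels $r+1,\dots,k$.

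The tool is the following observation. A job that has crossed threshold $r\theta$ has remaining effective work (real remaining service beyond $r\theta$, plus the overhead $c_r$ just incurred and all future overheads) with conditional mean $v_{r+1}$ from \eqref{eq:vi}. I will compute the expected post-threshold work of the job in service as a time average. The fraction of time the server spends on the level-$r$ portion of a job that will continue past $r\theta$ equals $\lambda\overline F_S(r\theta)\times(\text{length of that portion})$. For each such instant, the work remaining beyond the current level is deterministic given the crossing, with mean $v_{r+1}$. This step uses the proactive-preemption convention of the Remark.

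For the merged level $r=j$, the portion length is $j\theta+\sum_{r'<j}c_{r'}$, which gives the first term of \eqref{eq:OJ}. For lower levels $r=j+1,\dots,k-1$, the portion consists of $\theta$ of real service plus the overhead $c_{r-1}$ paid on resumption, which gives the second sum. Level $k$ contributes nothing, since those jobs leave nothing beyond it.

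Putting the pieces together yields $d_J=\frac{\lambda}{2}\mathbb{E}[(S_k^{\mathrm{eff}})^2]-O_J+\lambda b_J d_J$, and solving for $d_J$ gives \eqref{eq:dJ} whenever $\lambda b_J<1$. That condition is implied by Lemma~\ref{lemma:MLPT_stability}.

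I expect the main obstacle to be the bookkeeping in $O_J$. Two points need care. First, the overhead attached to resuming a level-$r$ portion must be attributed to that portion, i.e.\ $c_{r-1}$ counted as part of the current service rather than of the future work. Second, $v_{r+1}$ must include $c_r$ but not $c_{r-1}$. To settle both, I will check consistency in two cases. For $j=k$, $O_J=0$ and the formula must reduce to the M/G/1 FCFS (Pollaczek--Khinchine) waiting time with service $S_k^{\mathrm{eff}}$. For $\gamma=0$, it must reduce to the known FB/MLFQ formulas of \cite{wolff1970timesharingqueue}.
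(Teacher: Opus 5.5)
Your proposal is correct and follows essentially the same route as the paper. Both arguments take the residual effective work $\frac{\lambda}{2}\mathbb{E}[(S_k^{\mathrm{eff}})^2]$, subtract $O_J$ (the time fraction $\lambda\overline F_S(r\theta)\times$ portion length, times the post-level mean $v_{r+1}$), add $\lambda b_J d_J$ for the jobs waiting in $Q_J$, and solve the resulting linear equation. Two of your remarks are small additions that the paper leaves implicit: no length-biasing enters $O_J$, because portion lengths are deterministic on $\{S>r\theta\}$; and the solvability condition follows from $\lambda b_J\le\rho_k^{\mathrm{eff}}<1$.
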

\begin{proof}\label{proof:lemma_5.3}
    The total mean waiting time of a job in $Q_J$ is the sum of the total mean effective service time of all waiting jobs in the queue and the remaining service time of the job in service when it arrives at $Q_J$. Let $b_J$ denote the mean amount of effective service, including the preemption overhead, performed for jobs in $Q_J$. For $j<k$, the amount of effective service time is the sum of the intrinsic service time $S$ truncated at the maximum amount of intrinsic service $j\theta$ and the total preemption overhead incurred every time the job is rescheduled, which would be at most $j-1$ times. This is given by 
    \begin{equation}
        b_J = \min \{S,j\theta\} + \sum_{r=1}^{j-1}c_r\mathbf{1}\{S>r\theta\}.
    \end{equation}
    For $j=k$, the amount of effective service time is the total effective service time $S_k^{\mathrm{eff}}$. Taking the expectation yields \eqref{eq:bJ}. Due to the PASTA property of memoryless arrivals, the expected number of waiting jobs in $Q_J$ is $\lambda d_J$. Therefore, the mean waiting time due to the waiting jobs in $Q_J$ is $\lambda d_Jb_J$. 

    Since the total effective service requirement of a job is
$S_k^{\mathrm{eff}}$, the mean residual workload contribution is
$
    \frac{\lambda}{2}
    \mathbb{E}[(S_k^{\mathrm{eff}})^2].
$
However, the tagged $j$-job in $Q_J$ would not be delayed by the expected remaining service time when an $\{i: i>r\}$ job in level $r = J,j+1,j+2,...,k-1$ is in service, as the tagged $j$-job has priority over the jobs in these levels. Therefore, we use a correction term $O_J$ to subtract the preempted expected remaining service time of jobs with lower priorities than the tagged $j$-job. Let $v_i$ denote the expected remaining effective service time of a job in level $i$. The expression of $v_i$ is given by Equation \eqref{eq:vi}. When an $\{i: i>J\}$ job in level $J$ is in service, the preemptible part of its expected remaining effective service time is $v_{j+1}$. When an $\{i: i>r\}$ job in level $r$ is in service for $r = j+1, j+2,...,k-1$, the preemptible part of its expected remaining effective service time is $v_{r+1}$ for $r = j+1,j+2,...,k-1$. The probability that an $\{i: i>r\}$ job in level $r = J,j+1,j+2,...,k-1$ is in service is given by
\begin{align}
    &\lambda \overline F_S(j\theta)
    \left(
        j\theta+\sum_{r=1}^{j-1}c_r
    \right) \quad \text{for} \;r=J,\\
    &\lambda
    \overline F_S(r\theta)
    \left(
        \theta+c_{r-1}
    \right) \quad \text{for} \;r=j+1,j+2,...,k-1.
\end{align}
Computing the expected total preemptible remaining effective service time of all $\{i: i>r\}$ jobs in level $r = J,j+1,j+2,...,k-1$ yields Equation \eqref{eq:OJ}. The total mean waiting time of a job in $Q_J$ is given by 
\begin{equation}
    d_J = \lambda d_Jb_J+ \frac{\lambda}{2}
    \mathbb{E}[(S_k^{\mathrm{eff}})^2]-O_J.
\end{equation}
Rearranging the equation yields Equation \eqref{eq:dJ}.
\end{proof}
\begin{lemma}[Mean waiting time of a j-job under the MLPT-($\theta$,$k$) policy]
    The mean waiting time of a j-job under the MLPT-($\theta$,$k$) policy is
    \begin{equation}\label{eq:d_j}
        d_j=
    \frac{
        d_J
        +
        \lambda a_{j-1}
        \mathbb{E}[X_{j-1}]
    }{
        1-\lambda\mathbb{E}[X_{j-1}]
    },
    \qquad j=2,\ldots,k,
    \end{equation}
where  $a_{j-1}
    =
    (j-1)\theta
    +
    \sum_{r=1}^{j-2}c_r$ and $\mathbb{E}[X_{j-1}]
    =
    \mathbb{E}[\min\{S,(j-1)\theta\}]
    +
    \sum_{r=1}^{j-2}
    c_r\overline F_S(r\theta).$
\end{lemma}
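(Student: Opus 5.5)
We would follow the classical Wolff-style time-sharing argument \cite{wolff1970timesharingqueue}: decompose the waiting time of a tagged $j$-job into the delay caused by work already present at its arrival, which is captured by $d_J$ from the previous lemma, and the delay caused by jobs that arrive afterwards and are served ahead of it before it reaches level $j$.

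\textbf{Which later arrivals delay the tagged job.} A job arriving after the tagged job enters $Q_1$. So it has priority over the tagged job as long as its own attained service is below $(j-1)\theta$, whereas the tagged job has already reached level $j$ or is still in transit to it. The effective work each such later arrival contributes is therefore its service truncated at $(j-1)\theta$, plus the overheads $c_1,\dots,c_{j-2}$ incurred at the thresholds it crosses before that point. Its mean is
\[
\mathbb{E}[X_{j-1}]=\mathbb{E}[\min\{S,(j-1)\theta\}]+\sum_{r=1}^{j-2}c_r\overline F_S(r\theta),
\]
by the same computation used for $b_J$ in \eqref{eq:bJ}, with $j-1$ in place of $j$.

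\textbf{Exposure window and fixed-point equation.} These later arrivals come as a Poisson stream of rate $\lambda$. The time during which they can interfere is the tagged job's waiting time $d_j$ plus its own service in the first $j-1$ levels. That own service has effective length $(j-1)\theta+\sum_{r=1}^{j-2}c_r=a_{j-1}$. One point needs care: the overhead $c_{j-1}$ is paid when the job resumes at level $j$, so it lies after the waiting period and is not part of the window. Taking expectations and using Wald's identity gives
\[
d_j=d_J+\lambda\bigl(d_j+a_{j-1}\bigr)\mathbb{E}[X_{j-1}].
\]
Solving for $d_j$ yields \eqref{eq:d_j}. The denominator $1-\lambda\mathbb{E}[X_{j-1}]$ is positive because $\mathbb{E}[X_{j-1}]\le \mathbb{E}[S_k^{\mathrm{eff}}]$, so Lemma~\ref{lemma:MLPT_stability} applies. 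Equivalently, one can view the extra delay as a busy period of truncated jobs started by an initial amount of work; the mean-busy-period formula gives the same expression.

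\textbf{Main obstacle.} The hard step is justifying that $d_J$ exactly accounts for the work present at arrival that precedes the tagged job at level $j$. Two subtleties arise:
\begin{itemize}
\item Jobs already in the system with attained service below $(j-1)\theta$ are counted in $d_J$ through the merged level. This relies on the fact that, for any work-conserving ordering within the merged level, the total amount of such work served before the tagged job is the same.
\item Non-preemptive service within a quantum means a later arrival may wait for a lower-level job currently in service. That residual is already included in $d_J$ through the term $\frac{\lambda}{2}\mathbb{E}[(S_k^{\mathrm{eff}})^2]-O_J$, so it must not be counted again.
\end{itemize}
We would argue this by a sample-path interchange showing that the merged-level system and the original system perform the same total high-priority work before the tagged job reaches attained service $(j-1)\theta$. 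With that in place, only the linear equation above remains.
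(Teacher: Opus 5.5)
Your proposal is correct and uses the same decomposition as the paper: $d_J$ for the work already present at arrival, plus the truncated work $X_{j-1}$ of later arrivals over the window that ends when the tagged job starts level $j$. The only difference is bookkeeping. The paper counts the $\lambda(d_J+a_{j-1})$ first-generation arrivals and charges each a busy period of mean $\mathbb{E}[X_{j-1}]/(1-\lambda\mathbb{E}[X_{j-1}])$, while you count all arrivals over the self-consistent window $d_j+a_{j-1}$ via Wald's identity, which gives the identical linear equation. One small slip in your side remark: the old jobs that precede the tagged job are those in $Q_1,\dots,Q_j$, with attained service below $j\theta$ rather than $(j-1)\theta$, and they receive service up to $j\theta$, which is why $b_J$ truncates at $j\theta$.
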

\begin{proof}
$d_j$ and $d_J$ only differ by the time needed for the jobs that arrive after the tagged job arrives at the system and before it starts service in level $j$ to complete service or be demoted to a lower priority level than the tagged job. The expected duration of this period is given by $d_J + a_{j-1}$, where $a_{j-1} =(j-1)\theta
    +
    \sum_{r=1}^{j-2}c_r$ is the effective service time of the tagged job before it starts service in level $j$. Each of the new jobs is served for $X_{j-1} = \min\{S,(j-1)\theta\}
    +
    \sum_{r=1}^{j-2}
    c_r\mathbf{1}\{S>r\theta\}$, after which it either completes or has a lower priority than the tagged job. The expected duration of the busy period incurred by the jobs with service time $X_{j-1}$ arriving at rate $\lambda$ is $\frac{\mathbb{E}[X_{j-1}]}{1-\lambda\mathbb{E}[X_{j-1}]}.$ The total delay of the tagged job due to the new jobs is the total duration of the busy periods incurred by the service time of the new jobs. The expected number of such busy periods is equal to the expected number of jobs that arrive during $d_J + a_{j-1}$, which is $\lambda(d_J + a_{j-1})$. Therefore, the total delay due to the new arrivals is
    \begin{equation}
    \frac{
        \lambda
        \left(d_J+a_{j-1}\right)
        \mathbb{E}[X_{j-1}]
    }{
        1-\lambda\mathbb{E}[X_{j-1}]
    }.
    \end{equation}
    The mean waiting time of a $j$-job is the sum of its mean waiting time due to the old jobs and the new jobs, which is
    \begin{equation}
        d_j=
    d_J
    +
    \frac{
        \lambda
        \left(d_J+a_{j-1}\right)
        \mathbb{E}[X_{j-1}]
    }{
        1-\lambda\mathbb{E}[X_{j-1}]
    }.
    \end{equation}
    Rearranging the equation yields Equation \eqref{eq:d_j}.
    For $j=1$, we simply have
$
    d_1=d_J.
$
\end{proof}
\begin{theorem}[Mean Response time of the MLPT-($\theta$,$k$) policy]\label{thm:E_T_MLPT}
    Under the MLPT-(theta,k) policy, the mean response time of a job is
    \begin{equation}\label{eq:E_T_MLPT}
        \mathbb{E}[T_{\mathrm{MLPT}(\theta,k)}]
    =
    \mathbb{E}[S_k^{\mathrm{eff}}]
    +
    \sum_{j=1}^{k}p_j d_j,
    \end{equation}
    where $\mathbb{E}[S_k^{\mathrm{eff}}]$, $p_j$, $d_j$ are as in Equation \eqref{eq:mlpt_effective_service}, \eqref{eq:p_j} and \eqref{eq:d_j}, respectively.
\end{theorem}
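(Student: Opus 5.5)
The plan is to condition on the level at which the tagged job completes and apply the law of total expectation. By the definition of $p_j$ in \eqref{eq:p_j}, the events $\{J=j\}$, $j=1,\ldots,k$, partition the sample space. For $j<k$ the event is $(j-1)\theta<S\le j\theta$, and for $j=k$ it is $S>(k-1)\theta$. Hence
\[
\mathbb{E}[T_{\mathrm{MLPT}(\theta,k)}]=\sum_{j=1}^{k}p_j\,\mathbb{E}[T\mid J=j].
\]
For each conditional mean I would substitute the decomposition \eqref{eq:Tj}: total waiting time $d_j$, plus the intrinsic service $\mathbb{E}[S\mid J=j]$, plus the overheads $\sum_{r=1}^{j-1}c_r$ incurred at the thresholds the job crosses. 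For $j=k$ the job crosses every threshold up to $(k-1)\theta$, so this sum has all $k-1$ terms. By PASTA the tagged job sees the time-stationary state, so the values $d_j$ from the preceding lemma (built on $d_J$ in \eqref{eq:dJ}) are the correct conditional waiting times. Here $d_1=d_J$ with $j=1$. All of this requires the stability condition of Lemma~\ref{lemma:MLPT_stability}.

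The remaining step is to show that the non-waiting parts sum to the mean effective service. That is, I must show
\[
\sum_{j=1}^{k}p_j\Bigl(\mathbb{E}[S\mid J=j]+\sum_{r=1}^{j-1}c_r\Bigr)=\mathbb{E}[S_k^{\mathrm{eff}}].
\]
The first part gives $\sum_j p_j\mathbb{E}[S\mid J=j]=\mathbb{E}[S]$ directly. For the overhead part, I would exchange the order of summation:
\[
\sum_{j=1}^{k}p_j\sum_{r=1}^{j-1}c_r=\sum_{r=1}^{k-1}c_r\sum_{j=r+1}^{k}p_j .
\]
The inner sum telescopes to $\Pr(S>r\theta)=\overline F_S(r\theta)$, because the intervals $((j-1)\theta,j\theta]$ for $j=r+1,\ldots,k-1$, together with the tail $\{S>(k-1)\theta\}$, exactly tile $(r\theta,\infty)$. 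Using $c_r=\gamma g(r\theta)$ from \eqref{eq:mlpt_preemption_cost} and comparing with \eqref{eq:mlpt_mean_effective_service} gives the claim. The waiting terms then contribute $\sum_j p_j d_j$, which yields \eqref{eq:E_T_MLPT}.

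The main obstacle is not the algebra but justifying \eqref{eq:Tj} as an exact conditional decomposition. Specifically, the overhead $c_r$ must be counted as part of the tagged job's own time in system, charged when the job is rescheduled, and not also inside $d_j$. Otherwise it would be double counted. To settle this I would check the convention used in the previous lemma: there $a_{j-1}$ includes the tagged job's own overheads $\sum_{r=1}^{j-2}c_r$ only as the span during which new arrivals can occur, and not as part of $d_j$ itself. That confirms $d_j$ is pure waiting.

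I would also check the boundary cases $j=1$, where no overhead is incurred, and $k=\infty$. For $k=\infty$ the sums become series, and interchanging them is justified by nonnegativity (Tonelli), given $\mathbb{E}[S_k^{\mathrm{eff}}]<\infty$ under stability.
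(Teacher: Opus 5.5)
Your proposal is correct and takes essentially the same route as the paper: condition on the completion level $J$, substitute the per-level decomposition $d_j+\mathbb{E}[S\mid J=j]+\sum_{r=1}^{j-1}c_r$, and swap the double sum so that $\sum_{j=r+1}^{k}p_j=\overline F_S(r\theta)$ turns the overhead terms into $\sum_{r=1}^{k-1}c_r\overline F_S(r\theta)$, which recovers $\mathbb{E}[S_k^{\mathrm{eff}}]$. Your additional checks are sound refinements that the paper leaves implicit: that $d_j$ contains no copy of the tagged job's own overheads, and the use of Tonelli for $k=\infty$.
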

\begin{proof}
Conditioned on completing in level $j$, the mean response time is
\begin{equation}
\label{eq:Tj}
    \mathbb{E}[T\mid J=j]
    =
    d_j
    +
    \mathbb{E}[S\mid J=j]
    +
    \sum_{r=1}^{j-1}c_r.
\end{equation}
Averaging over the level in which the job completes gives
\begin{align}
    \mathbb{E}[T]
    &=
    \sum_{j=1}^{k}
    p_j d_j
    +
    \sum_{j=1}^{k}
    p_j\mathbb{E}[S\mid J=j]
    +
    \sum_{j=1}^{k}
    p_j
    \sum_{r=1}^{j-1}c_r
    \nonumber =
    \mathbb{E}[S]
    +
    \sum_{r=1}^{k-1}
    c_r\overline F_S(r\theta)
    +
    \sum_{j=1}^{k}
    p_j d_j\\
    &=\mathbb{E}[S_k^{\mathrm{eff}}]
    +
    \sum_{j=1}^{k}p_j d_j.
\end{align}
\end{proof}

Theorem \ref{thm:E_T_MLPT} provides a useful expression for optimizing the mean response time with respect to $\theta$. For simplicity, we assume equal quantum $\theta$ for each priority level in our analysis, so the optimal quantum $\theta^*$ can be obtained by solving a one-dimensional optimization problem, if $k$ is given. It is also convenient to extend our analysis to consider different quanta at each priority level. We can denote the quantum at priority level $i$ as $\theta_i$ for $i \in \{1, 2,...,k-1\}$, and let the jobs in the last level $k$ serve to completion. So, we can represent the policy by a $(k-1)$-dimensional vector $\boldsymbol{\theta}\in \mathbb{R}_{\geq0}^{k-1}$, and solve a $(k-1)$-dimensional optimization problem to optimize the mean response time.

While we note in Section \ref{sec:3} that any policy that preempts with positive probability increases the effective mean service requirement and has a smaller stability region than its non-preemptive counterpart, by optimizing $\theta$, the stability region of MLPT-$(\theta,k)$ can be made arbitrarily close to the stability region of non-preemptive policies. Consider a base load $\rho_0'$ arbitrarily close to 1, making any policy incurring positive preemption overhead unstable. Then at $\rho_0 = \rho_0'$, selecting $\theta = 0$ or $\theta = \infty$ will reduce the MLPT-$(\theta,k)$ to FCFS, and thus remain stable. For any $\rho_0<\rho_0'$, optimizing $\theta$ will yield a smaller or equal mean response time for MLPT-($\theta^*,k$) than FCFS. Therefore, optimizing $\theta^*$ ensures that MLPT-($\theta^*,k$) always has a smaller or equal mean response time as FCFS for any $\rho_0$.

\vspace{-2mm}
\subsection{Finding the optimal $\theta$ for MLPT-($\theta$,$k$)} \label{sec:4.1}
Finding a closed-form expression for the optimal $\theta $ that minimizes the mean response time in Theorem \eqref{thm:E_T_MLPT} is challenging for general service time distributions and requires numerical optimization. In this section, we aim to propose rules of thumb for selecting an approximate optimal value of $\theta$ to yield a low mean response time of the proposed MLPT policy for $k=2$ and $k>2$. To simplify calculation, we assume $g(r\theta)=r\theta$ in the following discussion.

The number of levels $k$ determines the granularity of the scheduler. In the simplest setting where we limit the number of priority levels to $k=2$, the mean response time in Theorem \ref{thm:E_T_MLPT} can be written more explicitly as a function of $\theta$.
\begin{corollary}[Mean response time of the MLPT-($\theta,2$) policy]
\begin{equation}\label{eq:ET_2}
\begin{split}
\mathbb{E}[T(\theta)]
&=\mathbb{E}[S_2^{\mathrm{eff}}]+p_1d_1+p_2d_2\\
&=\mathbb{E}[S]+\gamma\theta\bar F_S(\theta)+(1-\bar F_S(\theta))d_1+\bar F_S(\theta)d_2\\
&=\mathbb{E}[S]+\gamma\theta\bar F_S(\theta)+d_1+\bar F_S(\theta)\frac{\rho_2^\mathrm
{eff}}{1-\rho_2^\mathrm
{eff}}(\theta+d_1)\\
\end{split}
\end{equation}
where
\[
S_1=\min(S,\theta),\qquad
S_2=S-(1-\gamma)\theta\mid S>\theta,
\]
\[
\rho_1=\lambda\mathbb{E}[S_1],
\qquad
\rho_2=\lambda \bar F_S(\theta)\mathbb{E}[S_2],
\qquad
\rho_2^\mathrm
{eff}=\lambda\mathbb{E}[S]+\lambda \gamma \theta \bar F_S(\theta)=\rho_1+\rho_2,
\]
\[
d_1 = \frac{
\lambda\left(
\mathbb{E}[S_1^2]
+
\bar F_S(\theta)\mathbb{E}[S_2^2]
\right)
}{
2(1-\rho_1)
}, \text{ and }d_2 = \frac{\frac{\lambda \mathbb{E}[(S_2^\mathrm{eff})^2]}{2(1-\rho)}+\lambda \theta \mathbb{E}[S_1]}{1-\rho_1}.
\]
\end{corollary}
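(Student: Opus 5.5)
The plan is to specialize Theorem~\ref{thm:E_T_MLPT} and the lemmas for $d_J$ and $d_j$ to $k=2$ with $g(r\theta)=r\theta$, so that the only preemption cost is $c_1=\gamma\theta$. I will proceed in four steps.

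\textbf{Effective service and level probabilities.} From \eqref{eq:mlpt_effective_service},
\[
S_2^{\mathrm{eff}}=S+\gamma\theta\mathbf{1}\{S>\theta\},
\qquad
\mathbb{E}[S_2^{\mathrm{eff}}]=\mathbb{E}[S]+\gamma\theta\bar F_S(\theta).
\]
From \eqref{eq:p_j}, $p_1=1-\bar F_S(\theta)$ and $p_2=\bar F_S(\theta)$. Substituting these into \eqref{eq:E_T_MLPT} gives the first two lines of \eqref{eq:ET_2} immediately.

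\textbf{Computing $d_1$.} I will apply \eqref{eq:dJ} with $j=1$. Here $b_J=\mathbb{E}[\min\{S,\theta\}]=\mathbb{E}[S_1]$, so $1-\lambda b_J=1-\rho_1$. The second sum in \eqref{eq:OJ} is empty for $k=2$, and $c_0=0$, so
\[
O_J=\lambda\bar F_S(\theta)\,\theta\,v_2,
\qquad
v_2=\mathbb{E}[S-\theta+\gamma\theta\mid S>\theta].
\]
Next I decompose $\mathbb{E}[(S_2^{\mathrm{eff}})^2]$ by conditioning on $\{S>\theta\}$. On that event,
\[
S_2^{\mathrm{eff}}=\theta+(S-\theta+\gamma\theta),
\qquad
(S_2^{\mathrm{eff}})^2=\theta^2+2\theta(S-\theta+\gamma\theta)+(S-\theta+\gamma\theta)^2 .
\]
Then $\frac{\lambda}{2}\mathbb{E}[(S_2^{\mathrm{eff}})^2]-O_J$ should collapse to
\[
\frac{\lambda}{2}\Bigl(\mathbb{E}[S_1^2]+\bar F_S(\theta)\,\mathbb{E}[(S-\theta+\gamma\theta)^2\mid S>\theta]\Bigr).
\]
This is because the $\theta^2$ term combines with $\mathbb{E}[S^2;S\le\theta]$ into $\mathbb{E}[S_1^2]$. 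The cross term $\lambda\bar F_S(\theta)\theta v_2$ cancels exactly with $O_J$.

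This step is the main obstacle. The corollary writes $S_2=S-(1-\gamma)\theta$, which equals $S-\theta+\gamma\theta$, so it matches; I will check the cancellation carefully. I will interpret $\mathbb{E}[S_2^2]$ as the conditional second moment given $S>\theta$.

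\textbf{Computing $d_2$.} I will apply \eqref{eq:d_j} with $j=2$. Here $a_1=\theta$ and $\mathbb{E}[X_1]=\mathbb{E}[S_1]$, which gives $d_2=(d_J+\lambda\theta\mathbb{E}[S_1])/(1-\rho_1)$, where $d_J$ is \eqref{eq:dJ} at $j=k=2$. In that case $O_J=0$, since both sums in \eqref{eq:OJ} vanish (the factor $\bar F_S(2\theta)v_3$ is absent when $k=2$, as the level-$J$ term covers all service). Also $b_J=\mathbb{E}[S_2^{\mathrm{eff}}]$, so $d_J=\frac{\lambda\mathbb{E}[(S_2^{\mathrm{eff}})^2]}{2(1-\rho_2^{\mathrm{eff}})}$. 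This matches the stated $d_2$, reading $\rho$ as $\rho_2^{\mathrm{eff}}$.

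\textbf{Third line of \eqref{eq:ET_2}.} I will write $(1-\bar F_S)d_1+\bar F_Sd_2=d_1+\bar F_S(d_2-d_1)$ and show that $d_2-d_1=\frac{\rho_2^{\mathrm{eff}}}{1-\rho_2^{\mathrm{eff}}}(\theta+d_1)$. The route is to put $d_2$ over the common denominator $(1-\rho_1)(1-\rho_2^{\mathrm{eff}})$ and use $\rho_2^{\mathrm{eff}}=\rho_1+\rho_2$. A direct identity needs $\frac{\lambda}{2}\mathbb{E}[(S_2^{\mathrm{eff}})^2]$ expressed via $d_1$, using the decomposition from the $d_1$ step: it equals $(1-\rho_1)d_1+\lambda\bar F_S(\theta)\theta v_2$, and $\lambda\bar F_S(\theta)v_2=\rho_2$.

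Substituting this expression, the numerator of $d_2-d_1$ should simplify algebraically, with $\rho_2$ terms pairing with $\theta$ and $\rho_1$ terms pairing with $d_1$. If this exact identity fails, I would fall back on a direct busy-period argument. A 2-job waits $d_1+\theta$ until its first quantum ends, and then must wait for all work arriving in that interval, which generates busy periods at load $\rho_2^{\mathrm{eff}}$; this gives the factor $\rho_2^{\mathrm{eff}}/(1-\rho_2^{\mathrm{eff}})$ directly.
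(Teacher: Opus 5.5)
Your proposal is correct and is the specialization the paper intends (the paper states the corollary without a separate proof): you plug $k=2$, $c_1=\gamma\theta$ into Theorem~\ref{thm:E_T_MLPT} and the $d_J$, $d_j$ lemmas, read $O_J=0$ at $j=k$ and $\rho=\rho_2^{\mathrm{eff}}$, and your cross-term cancellation against $O_J$ for $d_1$ works exactly as you describe. The third line also goes through, so you can drop the hedge: substituting $\tfrac{\lambda}{2}\mathbb{E}[(S_2^{\mathrm{eff}})^2]=(1-\rho_1)d_1+\theta\rho_2$ makes the numerator of $d_2-d_1$ equal to $\rho_2^{\mathrm{eff}}(1-\rho_1)(\theta+d_1)$, i.e.\ $d_2=(d_1+\rho_2^{\mathrm{eff}}\theta)/(1-\rho_2^{\mathrm{eff}})$, which makes the busy-period fallback unnecessary (as phrased it does not describe the dynamics literally anyway, since new arrivals' second-level work queues behind the tagged job).
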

Although $\mathbb{E}[T(\theta)]$ is in general not convex, the stationary points at which $\frac{d \mathbb{E}[T(\theta)]}{d\theta}=0$ still yield candidates for local minima, which can guide the selection of the optimal quantum $\theta$.
To separate the impact of the intrinsic job size distribution $L$, the memory-induced slowdown threshold $\ell_C$, and the memory slowdown factor $\beta$, we define
\[
q(x)=
\begin{cases}
x, & x\leq \ell_C,\\[1mm]
\ell_C+\dfrac{x-\ell_C}{\beta}, & x>\ell_C,
\end{cases}
\]
and let $\theta = q(x)$ such that $\bar F_S(\theta) = \bar F_S(q(x)) = \bar F_L(x). $ The derivative of $q(x)$ is denoted by $q'(x)$.

Let $T(x)\equiv \mathbb E[T(q(x))] = \mathbb E[T(\theta)]$. Differentiating $T(x)$ with respect to $x$ yields the following equation
\begin{equation}\label{eq:T_prime}
    T'(x)
=
\bar F_L(x)B(x)
\left[
\frac{A(x)}{B(x)}
-
h_L(x)
\right],
\end{equation}
where $h_L(x) = \frac{f_L(x)}{\bar F_L(x)}$ is the hazard rate of $L$. $A(x)$ and $B(x)$ are defined in Appendix \ref{appendix:A_B_x}.

We note that $B(x)>0$ for any $x>0$ and $\rho_2^\mathrm{eff}<1$. We restrict attention to the range of $x$ satisfying $\bar F_L(x) >0$, then the derivative of the mean response
time has the same sign as $A(x)/B(x)-h_L(x)$. Thus, a stationary point is a local
minimum when $A(x)/B(x)$ crosses the hazard rate of $L$, $h_L(x)$, from below to above.
At the memory threshold $\ell_C$, a local minimum can instead arise
from a jump. Specifically, $q'(x)$ changes from $1$ to $1/\beta$,
while $A(x)/q'(x)$ and $B(x)$ remain continuous, so
\[
\frac{A(\ell_C^+)}{B(\ell_C^+)}
=\frac{1}{\beta}\frac{A(\ell_C^-)}{B(\ell_C^-)}.
\]
Therefore, $\ell_C$ is a strict local minimum if
\[
\frac{A(\ell_C^-)}{B(\ell_C^-)}
<h_L(\ell_C)<
\frac{A(\ell_C^+)}{B(\ell_C^+)},
\]
because the derivative changes from negative to positive across
the threshold. Figure~\ref{fig:A_B_x_h_L_comparison} illustrates how the memory threshold $\ell_C$ affects the choice of quantum. For the smaller memory threshold,
$\ell_C=3\mathbb{E}[L]$, the jump crosses the hazard curve, and there are no smaller below-to-above crossings than $\ell_C$. For the larger threshold, $\ell_C=8\mathbb{E}[L]$, a below-to-above crossing occurs at $x<\ell_C$, yielding a candidate
quantum $\theta$ smaller than $\ell_C$, and the jump at $\ell_C$ produces no
additional sign change.

\begin{figure}[htbp]
    \centering
    \begin{subfigure}[t]{0.45\linewidth}
        \centering
        \includegraphics[width=\linewidth]{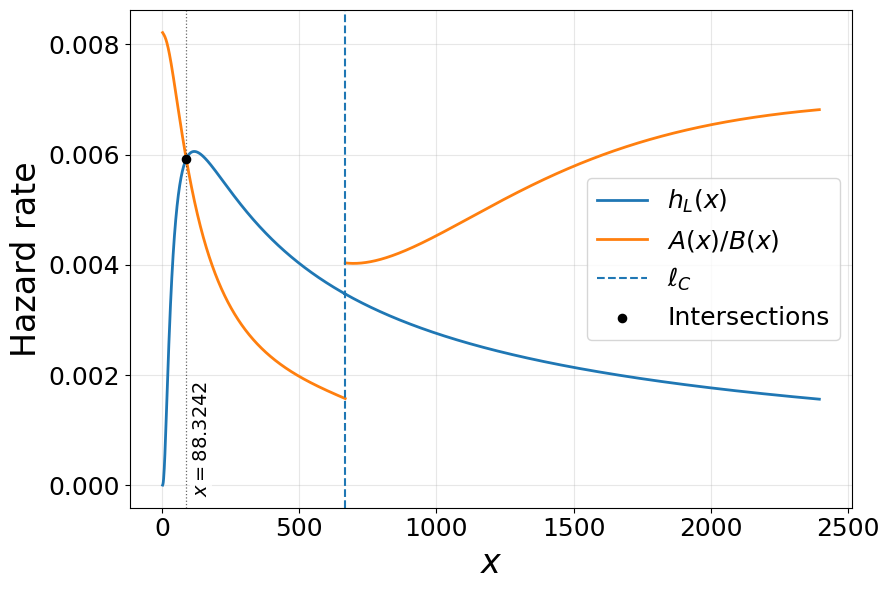}
        \caption{$\ell_C = 3\mathbb{E}[L]$}
        \label{fig:A_B_x_h_L_lC_3EL}
    \end{subfigure}
    \hfill
    \begin{subfigure}[t]{0.45\linewidth}
        \centering
        \includegraphics[width=\linewidth]{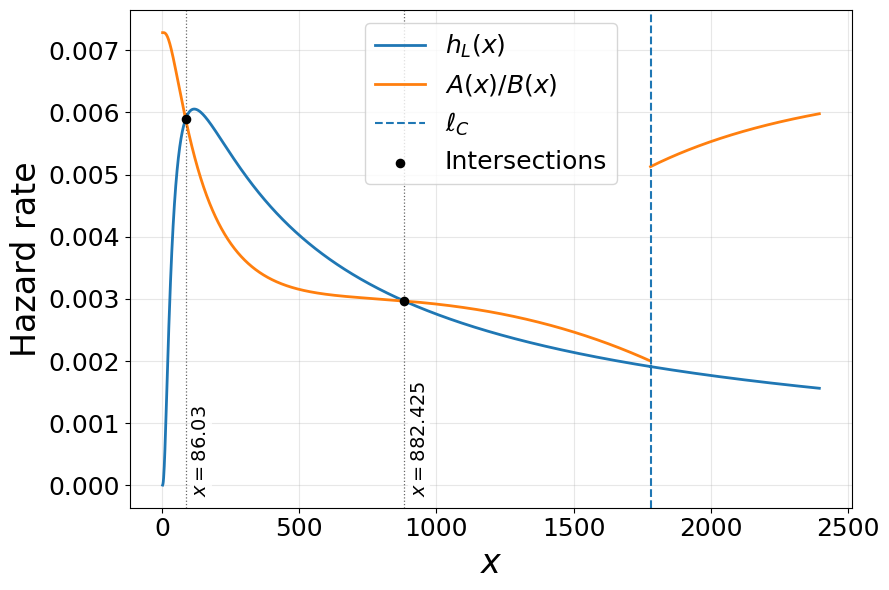}
        \caption{$\ell_C = 8\mathbb{E}[L]$}
        \label{fig:A_B_x_h_L_lC_8EL}
    \end{subfigure}
    \caption{Intersections of $\frac{A(x)}{B(x)}$ and $h_L(x)$
    for small and large $\ell_C$, where $L\sim \text{LogNorm}(5,0.9^2)$, $\beta = 0.39$, $\gamma = 0.1$, $\lambda = 0.0036$.
    \vspace{-3mm}
    }
    \label{fig:A_B_x_h_L_comparison}
\end{figure}
These examples suggest that when $\ell_C$ is small, $\ell_C$ itself can be a strong candidate for the optimal quantum. Intuitively, when $\ell_C$ is small, preempting earlier than $\ell_C$ may cause frequent unnecessary preemptions for small jobs that are likely to finish before the slowdown starts, while preempting later than $\ell_C$ lets long jobs receive some slow service before being preempted, which would delay the service of the waiting small jobs. Figure \ref{fig:beta_sweep_l_C_comparison} and Figure \ref{fig:gamma_sweep_l_C_comparison} illustrate the impact of $\beta$ and $\gamma$ on the optimal quantum for small and large $\ell_C$. When $\ell_C$ is large, larger $\gamma$ and smaller $\beta$ encourage choosing a larger quantum; when $\ell_C$ is small, the impact of $\ell_C$ dominates the choice of quantum, as more jobs encounter the slowdown. However, when $\ell_C$ is very small, choosing $\theta=\ell_C$ causes a large fraction of jobs to require preemption, adding overhead to a system already burdened by prolonged service in the slow-service region. Increasing $\theta$ reduces the fraction of jobs requiring preemption; when this benefit outweighs the higher cost per preemption and the additional delay from longer first-level service, the optimal quantum lies above $\ell_C$. Therefore, there would be a lower and upper bound for $\ell_C$ for it to be a useful threshold for preemption in the two-level policy.
\begin{figure}[htbp]
    \centering
    \begin{subfigure}[t]{0.45\linewidth}
        \centering
        \includegraphics[width=\linewidth]{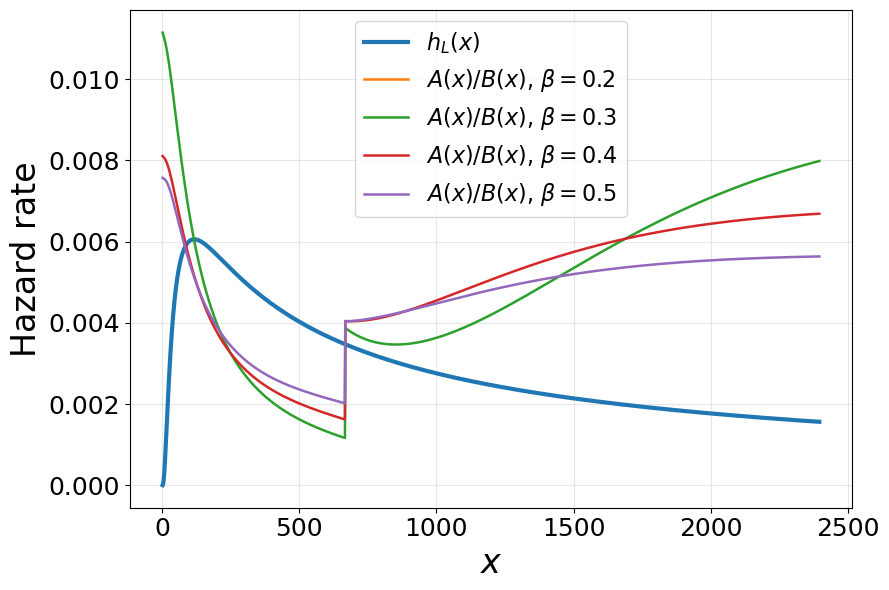}
        \caption{$\ell_C = 3\mathbb{E}[L]$}
        \label{fig:2_level_beta_sweep_small_l_C}
    \end{subfigure}
    \hfill
    \begin{subfigure}[t]{0.45\linewidth}
        \centering
        \includegraphics[width=\linewidth]{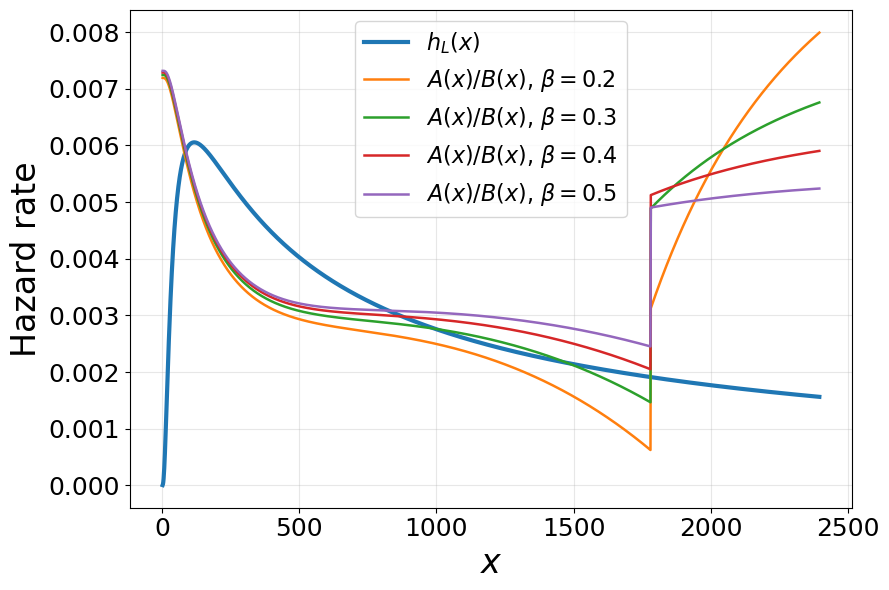}
        \caption{$\ell_C = 8\mathbb{E}[L]$}
        \label{fig:2_level_beta_sweep_large_l_C}
    \end{subfigure}
    \caption{Intersections of $\frac{A(x)}{B(x)}$ and $h_L(x)$
    for different slowdown factors $\beta$ when $\ell_C$ is small and large, where $L\sim \text{LogNorm}(5,0.9^2)$, $\gamma = 0.1$, $\lambda = 0.0036$
    \vspace{-2mm}
    }
    \label{fig:beta_sweep_l_C_comparison}
\end{figure}
\begin{figure}[htbp]
    \centering
    \begin{subfigure}[t]{0.45\linewidth}
        \centering
        \includegraphics[width=\linewidth]{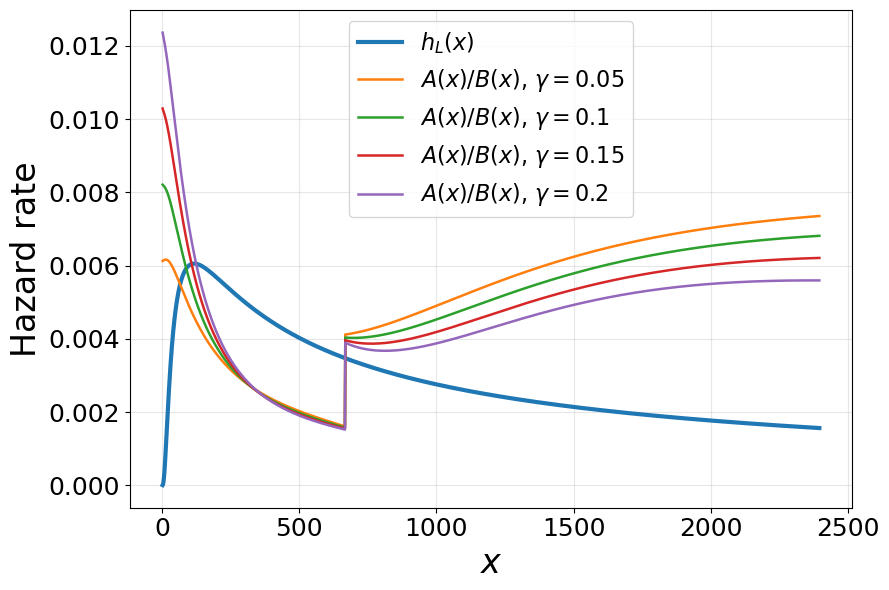}
        \caption{$\ell_C = 3\mathbb{E}[L]$}
        \label{fig:2_level_gamma_sweep_small_l_C}
    \end{subfigure}
    \hfill
    \begin{subfigure}[t]{0.45\linewidth}
        \centering
        \includegraphics[width=\linewidth]{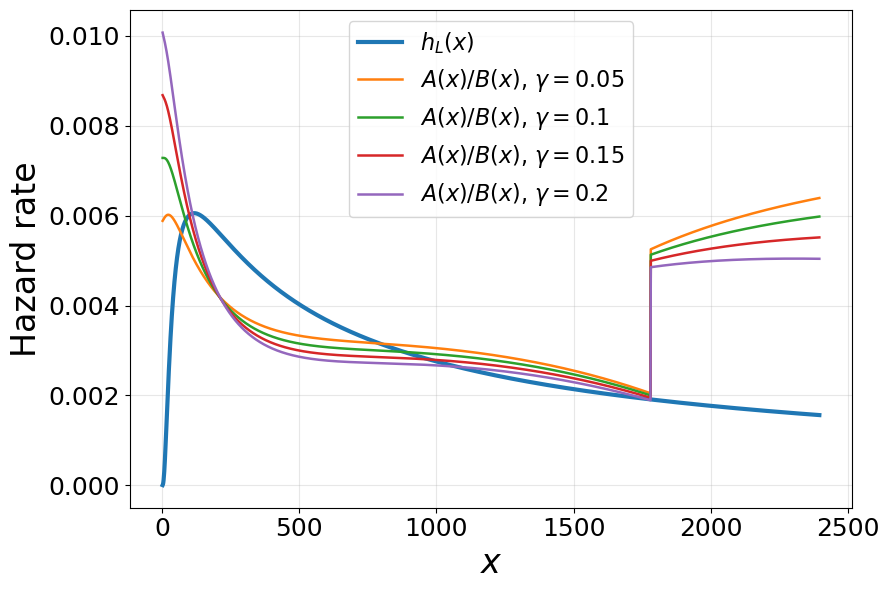}
        \caption{$\ell_C = 8\mathbb{E}[L]$}
        \label{fig:2_level_gamma_sweep_large_l_C}
    \end{subfigure}
    \caption{Intersections of $\frac{A(x)}{B(x)}$ and $h_L(x)$
    for different preemption cost factor $\gamma$ when $\ell_C$ is small and large.
    \vspace{-3mm}
    }
    \label{fig:gamma_sweep_l_C_comparison}
\end{figure}

Although two levels limit each job to at most one preemption, they also
limit the scheduler's ability to distinguish jobs after their initial
service period. Once a job enters the second level, it is served
nonpreemptibly to completion. A job with a large remaining service
requirement can therefore block subsequent arrivals, including short
jobs. Increasing $\theta$ reduces the remaining second-level work of
each surviving job and allows more jobs to finish without preemption,
but also delays the opportunity to give waiting jobs priority and incurs higher preemption overhead for the preempted jobs.
Consequently, a single boundary must balance several competing effects.

Allowing $k>2$ provides additional opportunities to reassign priority
as jobs accumulate service, postponing entry into the terminal,
nonpreemptible level until age $(k-1)\theta$. This can reduce the
blocking caused by long jobs. The benefit comes at the cost of
additional overhead. A job with service requirement $S$ undergoes
$N$ demotions and incurs a total preemption overhead of
\[
H_k(\theta)
=\gamma\theta\sum_{r=1}^{N}r
=\frac{\gamma\theta}{2}N(N+1),
\qquad
N=\min\!\left\{\left\lfloor\frac{S}{\theta}\right\rfloor,k-1\right\}.
\]
Thus, additional levels provide greater scheduling flexibility but
can substantially increase the effective load, particularly for
long jobs. Increasing $k$ therefore requires a careful choice of the quantum $\theta$ to avoid excessive preemption. 

These observations motivate a moment-based approximation for the
quantum. When many levels are available, replacing the overhead sum
by an integral gives the rough approximation
\[
\mathbb E[H_\infty(\theta)]
=\gamma\theta\sum_{r=1}^{\infty}r\,\bar F_S(r\theta)
\approx
\frac{\gamma}{\theta}
\int_0^\infty a\,\bar F_S(a)\,da
=
\frac{\gamma\mathbb E[S^2]}{2\theta}.
\]
This approximation captures the inverse dependence of cumulative
overhead on the quantum. Balancing this effect against a roughly
linear penalty for delaying priority changes suggests a square-root
scale. We therefore consider the empirical rule
\begin{equation}\label{eq:theta_square_root_rule}
\widehat\theta
=
\mathbb E[S]
+
2\sqrt{\frac{\gamma\mathbb E[S^2]}{1-\rho_0}},
\qquad
\rho_0=\lambda\mathbb E[S]<1.
\end{equation}
Here, $\rho_0$ is the base load, excluding preemption overhead.
Larger overhead costs or service-time variability favor a larger
quantum, while a smaller spare-capacity margin $1-\rho_0$ makes
additional overhead more costly. The additive term $\mathbb E[S]$ and the multiplicative factor 2 mainly come from numerical observations.

Therefore, we suggest the following empirical rule of thumb for selecting the quantum $\theta$:
\begin{equation}
    \widehat{\theta} =
\begin{cases}
\ell_C,
& k=2 , \text{ and } a\leq \frac{\ell_C}{\mathbb{E}[L]}\leq b,\\[6pt]
\displaystyle
\mathbb{E}[S]
+2\sqrt{\frac{\gamma\mathbb{E}[S^2]}{1-\rho_0}},
& \text{otherwise}.
\end{cases}
\end{equation}
Here, $a$ and $b$ are configurable constants defining the range of normalized memory thresholds over which choosing $\theta=\ell_C$ is effective for $k=2$. For the simulation setup in Section 5, we find that $a=3$ and $b=7$ provide a reasonable approximation. These values are empirical and may vary with the job-size distribution $f_L$ and system parameters $\beta$ and $\gamma$.

Our rule of thumb provides interesting insights on how to set the quantum $\theta$ given the system parameters. Intuitively, a larger cost factor $\gamma$ favors a larger quantum $\theta$ and less frequent preemption. $\widehat{\theta}$ also suggests the memory threshold $\ell_C$ as a useful threshold when only two priority levels are allowed under limited memory space. With larger $\ell_C$ and higher scheduler granularity, the impact of memory-induced slowdown $\ell_C$ and $\beta$ is addressed implicitly in $\mathbb{E}[S]$, $\mathbb{E}[S^2]$, and $\rho_0$. Notably, the second-moment dependence reflects the cumulative overhead of preemption costs proportional to attained service. In Appendix \ref{appendix:optimal_theta_sweep}, we plot $\theta^*$ and $\widehat \theta$ under varying system parameters and arrival rates. The result shows that $\widehat \theta$ follows the trends in $\theta^*$ as the system parameters vary, supporting our proposed rule of thumb as a reasonable placeholder for $\theta^*$.
\section{Numerical Simulations}
In this section, we evaluate scheduling policies in the $M/G^{g}/1_{\beta}^{C}$ queue using mean response time as the performance metric. Unless otherwise stated, the intrinsic job size follows $L\sim\operatorname{LogNormal}(5,0.9^2)$, the slowdown factor is $\beta=0.39$, and the preemption-cost factor is $\gamma=0.1$. Each simulation measures the response times of 100,000 jobs following 20,000 warm-up arrivals. Results are averaged over eight independent random seeds, using matched arrival and job-size traces across policies. Error bars represent one standard error across the eight seed means.
\begin{figure}
    \centering
    \includegraphics[width=0.85\linewidth]{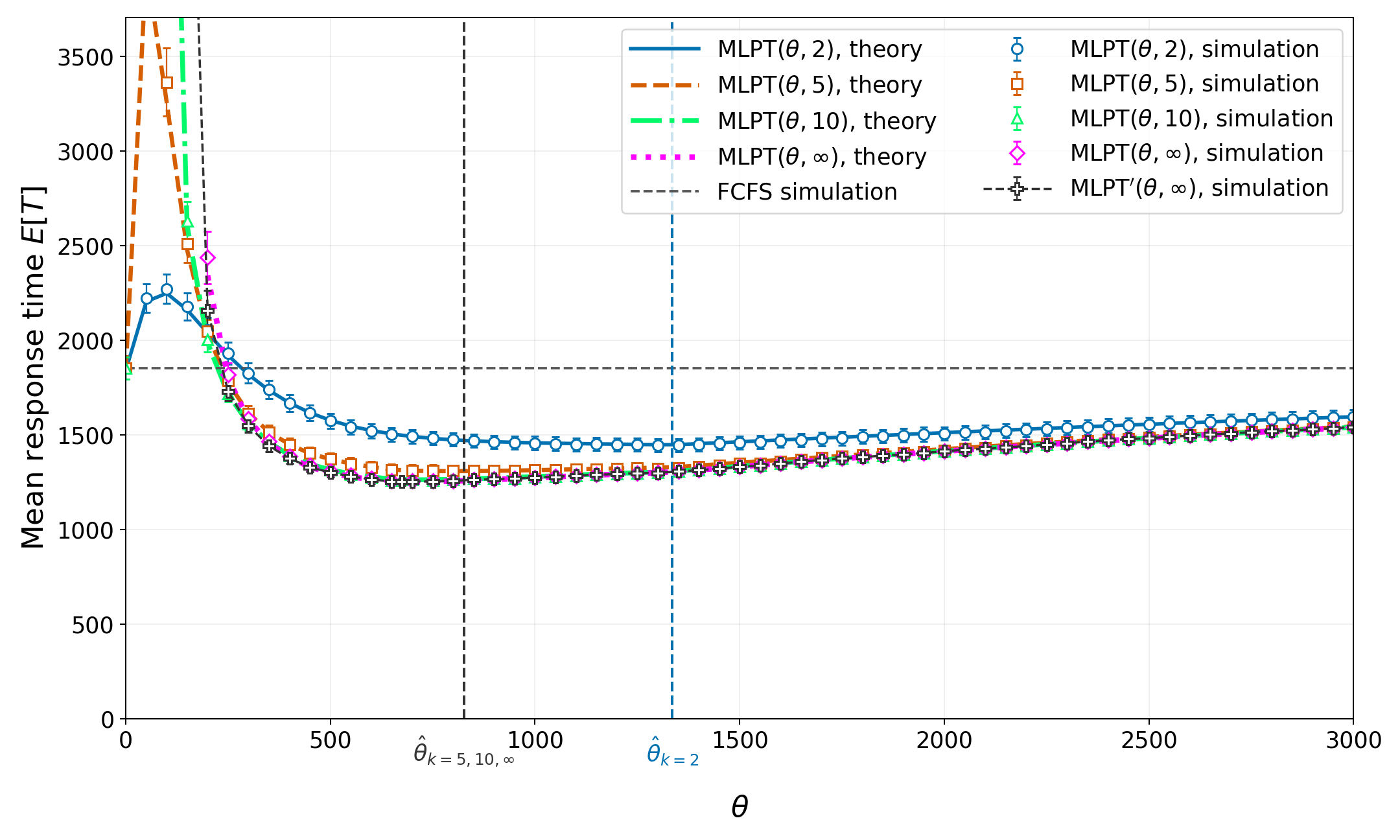}
    \vspace{-3mm}
    \caption{Mean response time of the MLPT-$(\theta,k)$ policy as a function of $\theta$ for $k=2,5,10,\infty$ and $\ell_C$ = 6$\mathbb{E}[L]$, where $L\sim\text{LogNorm}(5,0.9^2)$, $\beta = 0.39$, $\gamma = 0.1$, $\lambda = 0.0036$.
    \vspace{-3mm}
    }
    \label{fig:MLPT_theta_k_E[T]_lC_6E[L]}
\end{figure}
\begin{figure}[htbp]
    \centering
    \includegraphics[width=0.85\linewidth]{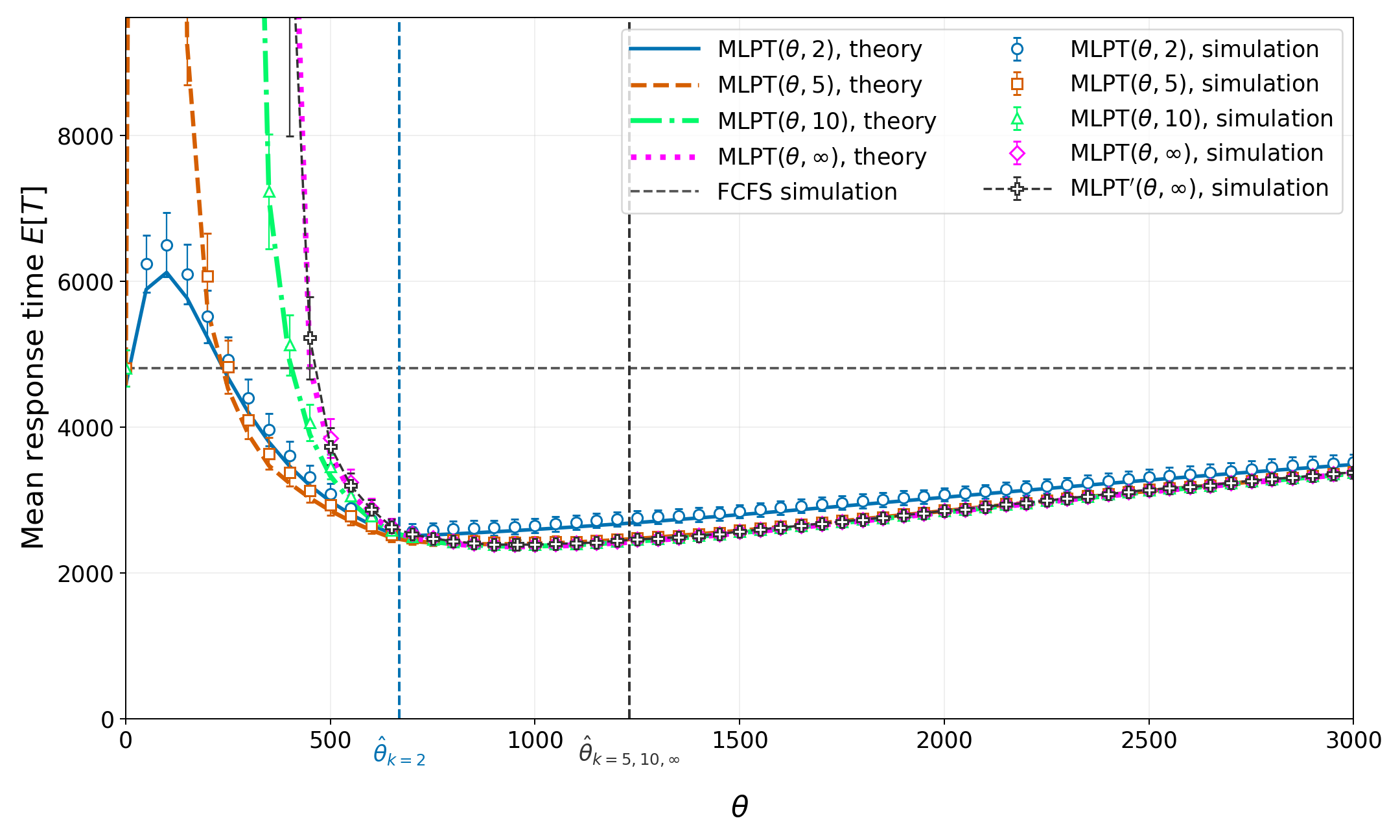}
    \vspace{-3mm}
    \caption{Mean response time of the MLPT-$(\theta,k)$ policy as a function of $\theta$ for $k=2,5,10,\infty$ and $\ell_C$ = 3$\mathbb{E}[L]$, where $L\sim\text{LogNorm}(5,0.9^2)$, $\beta = 0.39$, $\gamma = 0.1$, $\lambda = 0.0036$.
    \vspace{-3mm}
    }
    \label{fig:MLPT_theta_k_E[T]_lC_3E[L]}
\end{figure}
\begin{figure}[htbp]
    \centering
    \includegraphics[width=0.85\linewidth]{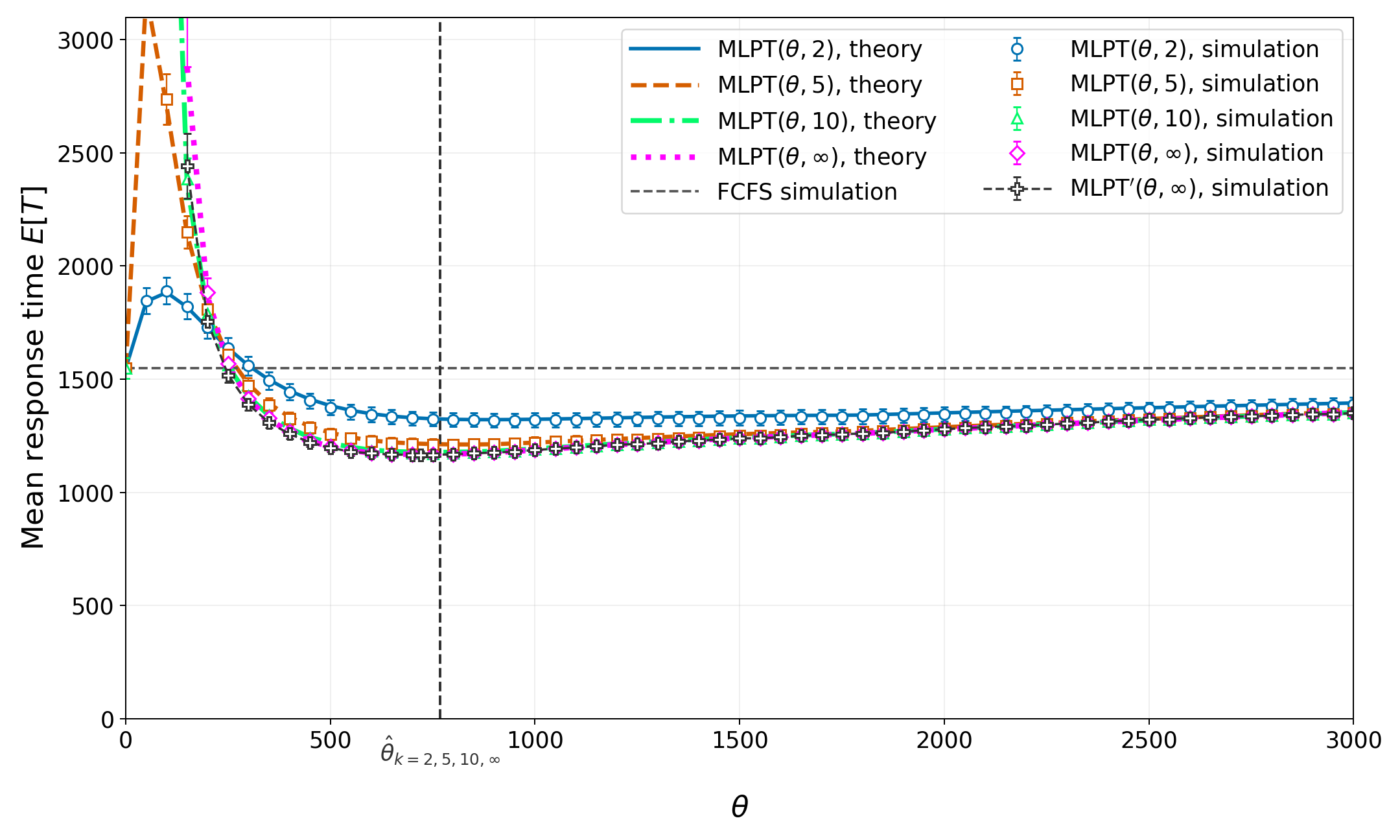}
    \vspace{-3mm}
    \caption{Mean response time of the MLPT-$(\theta,k)$ policy as a function of $\theta$ for $k=2,5,10,\infty$ and $\ell_C$ = 8$\mathbb{E}[L]$, where $L\sim\text{LogNorm}(5,0.9^2)$, $\beta = 0.39$, $\gamma = 0.1$, $\lambda = 0.0036$.
    \vspace{-3mm}
    }
    \label{fig:MLPT_theta_k_E[T]_lC_8E[L]}
\end{figure}

As the stability and the mean response time of the proposed MLPT-($\theta,k)$ policy depend on the choice of quantum $\theta$ and the number of priority levels $k$, we will first evaluate how the mean response time of MLPT-($\theta,k)$ policy varies with $\theta$ and $k$, and compare it with FCFS. For these simulations, we set the slowdown factor $\beta=0.39$, the preemption cost factor $\gamma=0.1$, and the arrival rate $\lambda = 0.0036$. Figure \ref{fig:MLPT_theta_k_E[T]_lC_6E[L]} shows the mean response time of MLPT-$(\theta,k)$ and FCFS for $k=2, 5, 10, \infty$ when the memory threshold $\ell_C = 6\mathbb{E}[L]$, corresponding to a base load $\rho_0=\lambda\mathbb{E}[S]\approx0.8233$. We can observe that at $\theta = 0$, MLPT-($\theta,k)$ is reduced to FCFS for finite $k$. At $\theta^*$, MLPT-$(\theta^*,2)$ decreases the mean response time by 21.7\% and MLPT-$(\theta^*,\infty)$ decreases the mean response time by 32.2\% compared to FCFS. This shows the performance gain from increasing the priority levels in the scheduler. At $\widehat{\theta}$ given by the proposed rule of thumb, MLPT-$(\widehat{\theta},\infty)$ decreases the mean response time by 31.9\% compared to FCFS, only 0.3 percentage points below the gain at $\theta^*$. This indicates that the rule provides a useful operating point without numerical optimization.

Figure \ref{fig:MLPT_theta_k_E[T]_lC_3E[L]} and Figure \ref{fig:MLPT_theta_k_E[T]_lC_8E[L]} show the performance gain of MLPT-$(\theta,k)$ compared to FCFS for $k=2, 5, 10, \infty$ when we vary the memory threshold $\ell_C$ to $3\mathbb{E}[L]$ and $8\mathbb{E}[L]$. It is clear from Figure \ref{fig:MLPT_theta_k_E[T]_lC_3E[L]} that with a small memory threshold at $\ell_C = 3\mathbb{E}[L]$, even using the simplest MLPT-$(\theta^*,2)$ policy reduces the mean response time by 46.6\% compared to FCFS, while MLPT-$(\theta^*,\infty)$ reduces the mean response time by 50.3\%. This highlights the importance of smarter scheduling under memory-limited settings. When the memory threshold $\ell_C$ is large, MLPT-$(\theta^*,2)$ reduces the mean response time by 14.9\%, while MLPT-$(\theta^*,\infty)$ reduces the mean response time by 24.6\%. The performance gain of the MLPT-$(\theta^*,k)$ with different priority levels $k$ also suggests that increasing the scheduler complexity brings more benefit when the memory capacity is large. When the memory capacity is small, a single preemption improves upon the FCFS policy significantly, and increasing the priority levels does not improve the performance of the MLPT-$(\theta^*,2)$ policy by much. When the memory capacity is large, increasing the scheduler granularity leads to more performance gain. We can also observe that the mean response time of MLPT-$(\theta,k)$ converges to the mean response time of MLPT-$(\theta,\infty)$ as $k$ increases in the stable range of $\theta$, suggesting that the mean response time of MLPT-$(\theta,\infty)$ can be approximated with MLPT-$(\theta,k)$ with finite $k$. 

\begin{figure}[htbp]
    \centering
    \includegraphics[width=0.85\linewidth]{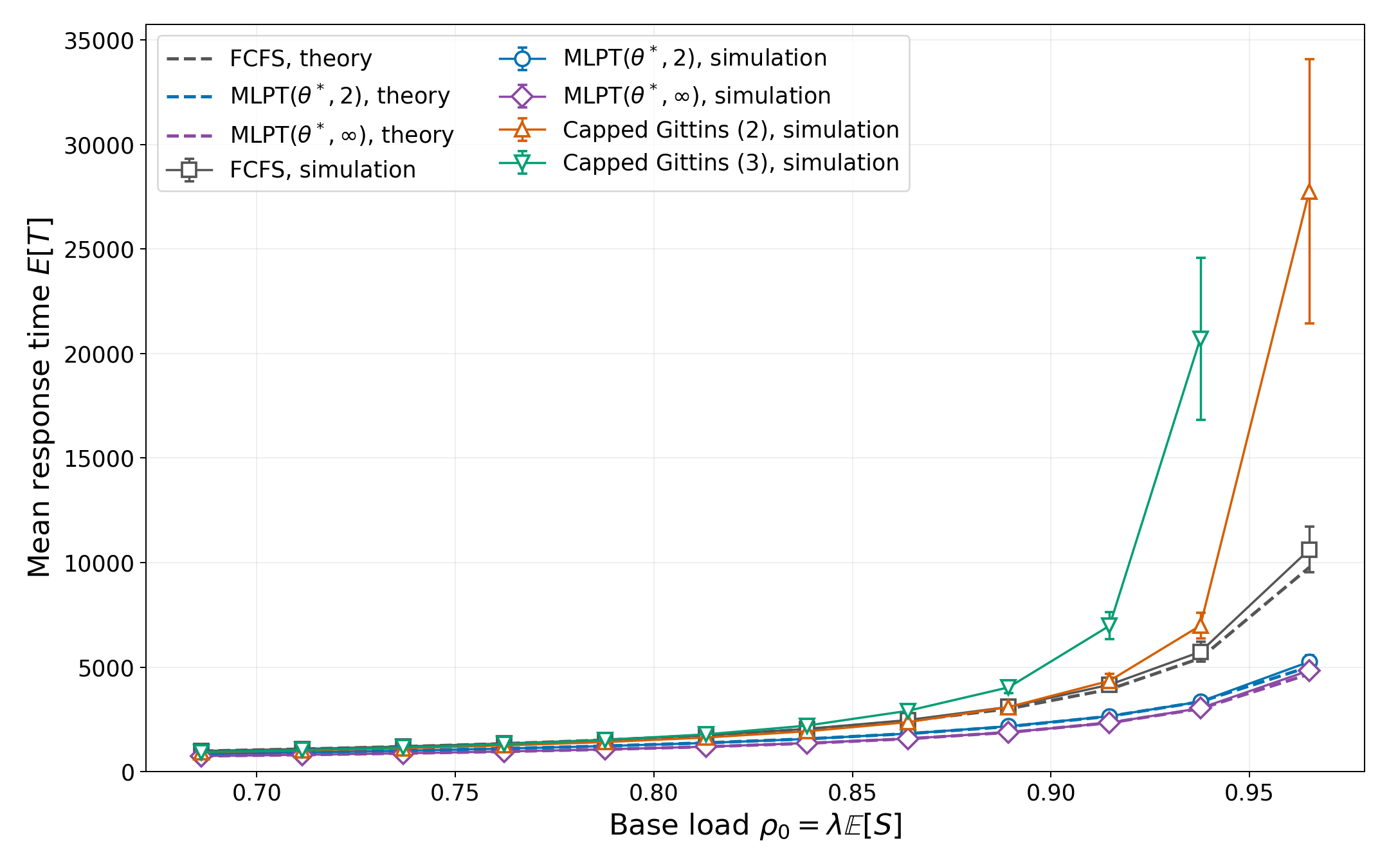}
    \vspace{-3mm}
    \caption{Comparison of the mean response time of FCFS, MLPT-$(\theta^*,2)$, MLPT-$(\theta^*,\infty)$, and Gittins Index policy with limited preemption at $k=2$ and $k=3$ under different base loads, where $L\sim\mathrm{LogNorm}(5,0.9^2)$, $\ell_C = 6\mathbb{E}[L]$, $\beta = 0.39$, $\gamma = 0.1$
    \vspace{-3mm}
    }
    \label{fig:E_T_vs_rho_0}
\end{figure}
The curves labeled as $\text{MLPT}'$-$(\theta,\infty)$ in Figure \ref{fig:MLPT_theta_k_E[T]_lC_6E[L]}, Figure \ref{fig:MLPT_theta_k_E[T]_lC_3E[L]}, and Figure \ref{fig:MLPT_theta_k_E[T]_lC_8E[L]} show the mean response time of the MLPT-$(\theta,\infty)$ policy when we remove the assumption in Remark 1; that is, when a job has attained $i\theta$ units of service and demoted to $Q_{i+1}$, the preemption overhead is incurred only if there are jobs with higher priorities in the system. We observe that the mean response time of $\text{MLPT}'$-$(\theta,\infty)$ is very close to the mean response time of $\text{MLPT}$-$(\theta,\infty)$. This supports our claim that despite the simplification in the analysis of the $\text{MLPT}$-$(\theta,k)$ policy, the mean response time of the $\text{MLPT}$-$(\theta,k)$ policy is a close approximation of the mean response time of its counterpart with arrival-dependent preemption under heavy load. 

Figure \ref{fig:E_T_vs_rho_0} compares the mean response time of MLPT-($\theta^*,2)$ and MLPT-$(\theta^*,\infty)$ against FCFS and other benchmark policies under different base loads $\rho_0$, where we vary the arrival rate $\lambda$ and keep $\mathbb{E}[S]$ fixed. We use Capped Gittins $(k)$ as an age-based scheduling benchmark, where we apply the Gittins Index policy, but limit the total number of preemptions of each job to at most $(k-1)$ times, to avoid instability. Figure \ref{fig:E_T_vs_rho_0} shows that when the base load is low, all policies have comparable performance. As the base load gets higher, the mean response time of Capped Gittins $(k)$ surges and becomes worse than the mean response time of FCFS. The last data point of Capped Gittins (3) is omitted as the simulation becomes unstable at the highest base load. On the other hand, MLPT-$(\theta^*,2)$ and MLPT-$(\theta^*,\infty)$ remain stable under heavy loads with much lower increase in the mean response time than FCFS and Capped Gittins $(k)$.

\vspace{-2mm}
\section{Conclusion}
In this work, we introduced the $M/G^{g}/1_{\beta}^{C}$ queue, a new model for inference workloads with dynamically growing memory footprints, finite memory capacity, memory-dependent service degradation, and nonzero preemption overhead. Using this model, we characterized the fundamental stability implications of memory constraints and the scheduling tradeoffs created by costly preemption. We showed that the classical Gittins-index policy can incur substantial preemption overhead, and proposed the MLPT policy to limit preemption while retaining priority for jobs with small attained service. We derived the mean response time under MLPT and also proposed an empirical rule of thumb for setting the quantum. 

For future work, we plan to relax the assumption that only a single job runs inference at a time. This will allow us to capture multiple servers sharing a common memory capacity and the interaction between batching, scheduling, and memory contention. We also plan to allow preempted jobs to retain their state in local memory. This can avoid transfer and recomputation overhead upon resumption, but reduce the memory available to other jobs and may increase their service times. On the implementation side, we plan to explore how our analytical insights and MLPT style  scheduling can be integrated into LLM inference serving systems in practice.
\bibliographystyle{unsrtnat}
\bibliography{sample-base}

\appendix
\section{Explanation of the terms in $T'(x)$}\label{appendix:A_B_x}
$A(x)$ and $B(x)$ in Equation \eqref{eq:T_prime} is given by
\[
\begin{aligned}
A(x)
=
q'(x)\Bigg[
&
\frac{
\lambda
\left(
1+
\bar F_L(x)\dfrac{\rho(x)}{1-\rho(x)}
\right)
}{
1-\rho_1(x)
}
Q(x)
+
\gamma
+
\frac{\rho(x)}{1-\rho(x)}+
\left(q(x)+d_1(x)\right)
\frac{
\lambda\gamma\bar F_L(x)
}{
(1-\rho(x))^2
}
\Bigg],
\end{aligned}
\]
\[
\begin{aligned}
B(x)
={}&
\frac{
\lambda
\left(
1+
\bar F_L(x)\dfrac{\rho(x)}{1-\rho(x)}
\right)
}{
1-\rho_1(x)
}
\frac{\gamma^2q(x)^2}{2}+
\gamma q(x)+
\frac{\rho(x)}{1-\rho(x)}
\left(q(x)+d_1(x)\right)
\\&+
\left(q(x)+d_1(x)\right)
\frac{
\lambda\gamma\bar F_L(x)q(x)
}{
(1-\rho(x))^2
},
\end{aligned}
\]
where
\[
Q(x)
=
d_1(x)
+
q(x)
-
(1-\gamma)\mathbb E[S_2],
\]
\[
S_1=\min(S,q(x)),
\qquad
S_2=S-(1-\gamma)q(x)\mid S>q(x),
\]
\[
\rho_1(x)=\lambda \mathbb E[S_1],
\qquad
\rho(x)
=
\lambda\mathbb E[S]
+
\lambda\gamma q(x)\bar F_L(x),
\]
and
\[
d_1(x)
=
\frac{
\lambda\left(
\mathbb E[S_1^2]
+
\bar F_L(x)\mathbb E[S_2^2]
\right)
}{
2(1-\rho_1(x))
}.
\]
\section{Optimal quantum $\theta$ with varying system parameters} \label{appendix:optimal_theta_sweep}
Figure \ref{fig:optimal_theta_sweep} shows how the optimal quantum $\theta^*$ for MLPT-$(\theta,k)$ varies with $\ell_C$, $\beta$, and $\gamma$, comparing $k=2$ and $k=\infty$ at a fixed arrival rate $\lambda=0.0036$. For the $\beta$ and $\gamma$ sweeps, we consider a small-memory setting, $\ell_C=3\mathbb{E}[L]$, and a large-memory setting, $\ell_C=8\mathbb{E}[L]$. Figure \ref{fig:optimal_theta_vs_lc} shows that both policies require a large quantum when $\ell_C$ is very small. In this regime, memory-induced slowdown increases job service requirements and pushes the base load close to saturation, leaving little capacity for preemption overhead. For $\ell_C<3.5\mathbb{E}[L]$, the infinite-level policy selects a larger quantum than the two-level policy, consistent with the need to control repeated preemptions of long jobs. The two-level policy incurs this cost at most once per job. As $\ell_C$ increases, slowdown and cumulative overhead become less restrictive, and the ordering eventually reverses. Additional priority levels allow moderately sized jobs to retain priority over longer jobs after leaving the first level, enabling the infinite-level policy to use a smaller quantum. In contrast, the two-level policy benefits from allowing more jobs to finish before entering its final, nonpreemptible level.

The $\beta$ and $\gamma$ sweeps further illustrate this tradeoff. For $\ell_C=3\mathbb{E}[L]$, the two-level optimum remains at $\ell_C$ throughout the tested $\gamma$ range and most of the tested $\beta$ range. The infinite-level optimum decreases with $\beta$ and increases with $\gamma$, with a plateau at $\ell_C$ over part of each sweep. Increasing $\beta$ reduces the service requirements of jobs entering the slowdown region, lowering both the base load and the number of quantum boundaries crossed by long jobs. Increasing $\gamma$ makes preemptions more expensive and therefore favors a larger quantum. For $\ell_C=8\mathbb{E}[L]$, both policies select smaller quanta as $\beta$ increases and larger quanta as $\gamma$ increases, with the infinite-level optimum consistently below the two-level optimum. Together, these results suggest that the memory boundary can anchor the optimal quantum under limited memory, while with more memory, the balance between priority separation and preemption overhead becomes more prominent. Figure \ref{fig:optimal_theta_sweep} also shows how our proposed rule of thumb for the optimal quantum $\widehat \theta$ varies with the system parameters. We observe that $\widehat{\theta}$ closely follows the trends in $\theta^*$ as the system parameters vary. 

These observations support our proposed rule of thumb in Section \ref{sec:4.1}.
\begin{figure}[htbp]
    \centering
    \begin{subfigure}[t]{0.3\linewidth}
        \centering
        \includegraphics[width=\linewidth]{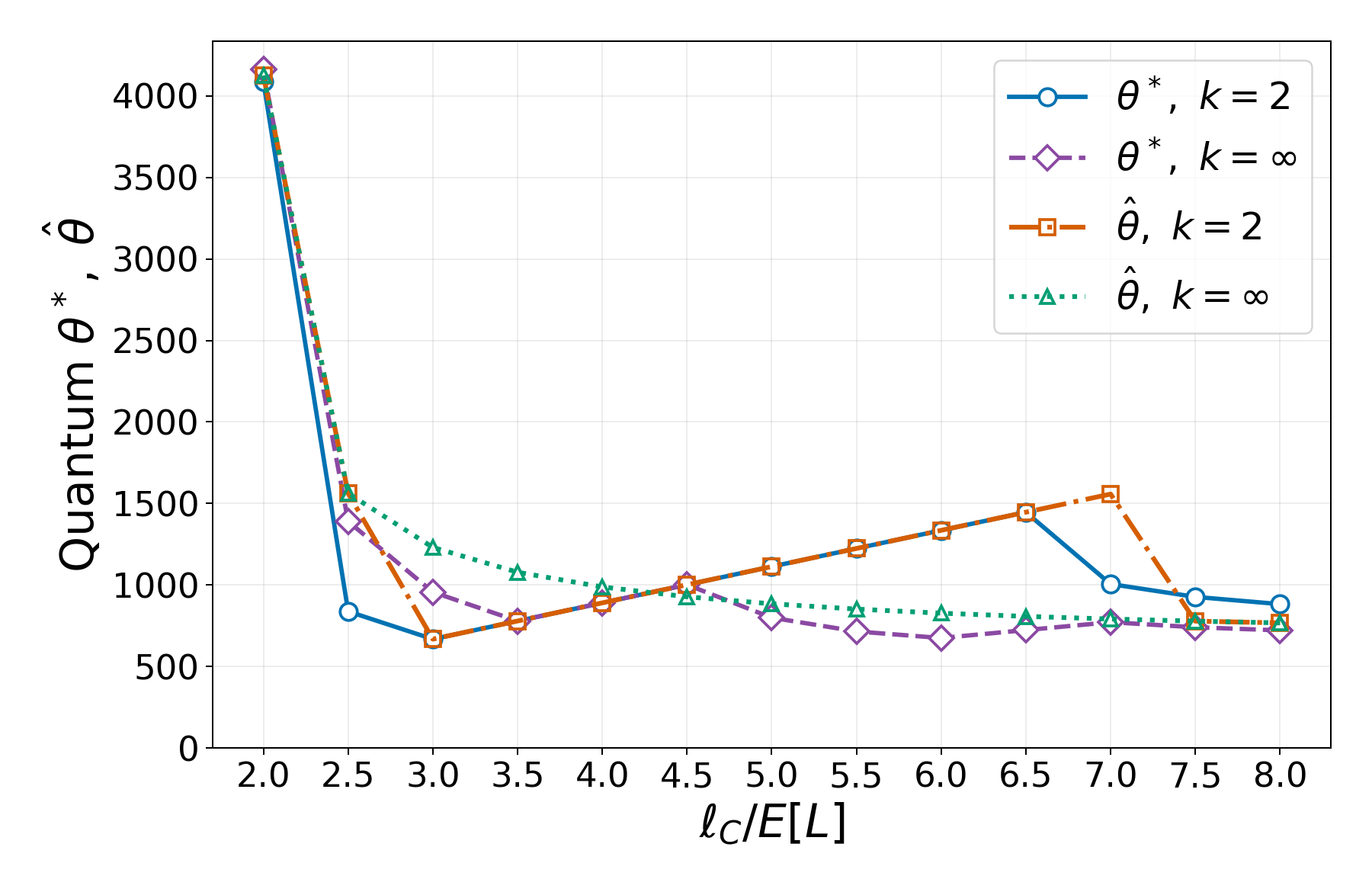}
        \caption{$\theta^*$, $\widehat \theta$ vs $\ell_C$, $\beta = 0.39$, $\gamma = 0.1$}
        \label{fig:optimal_theta_vs_lc}
    \end{subfigure}\hfill
    \begin{subfigure}[t]{0.3\linewidth}
        \centering
        \includegraphics[width=\linewidth]{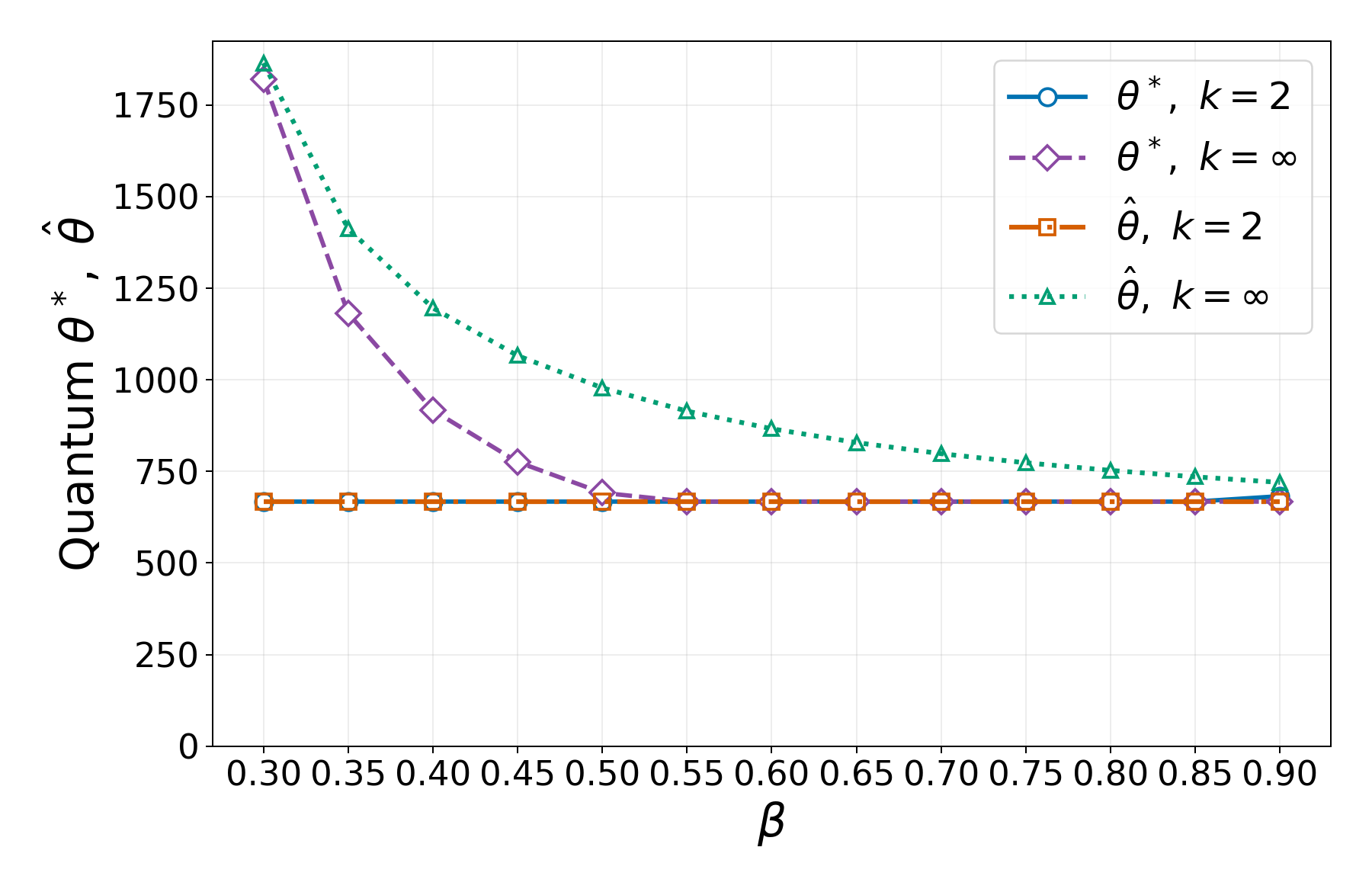}
        \caption{$\theta^*$, $\widehat \theta$ vs $\beta$, $\ell_C = 3\mathbb{E}[L],$ $\gamma = 0.1$}
        \label{fig:optimal_theta_vs_beta_C}
    \end{subfigure}\hfill
    \begin{subfigure}[t]{0.3\linewidth}
        \centering
        \includegraphics[width=\linewidth]{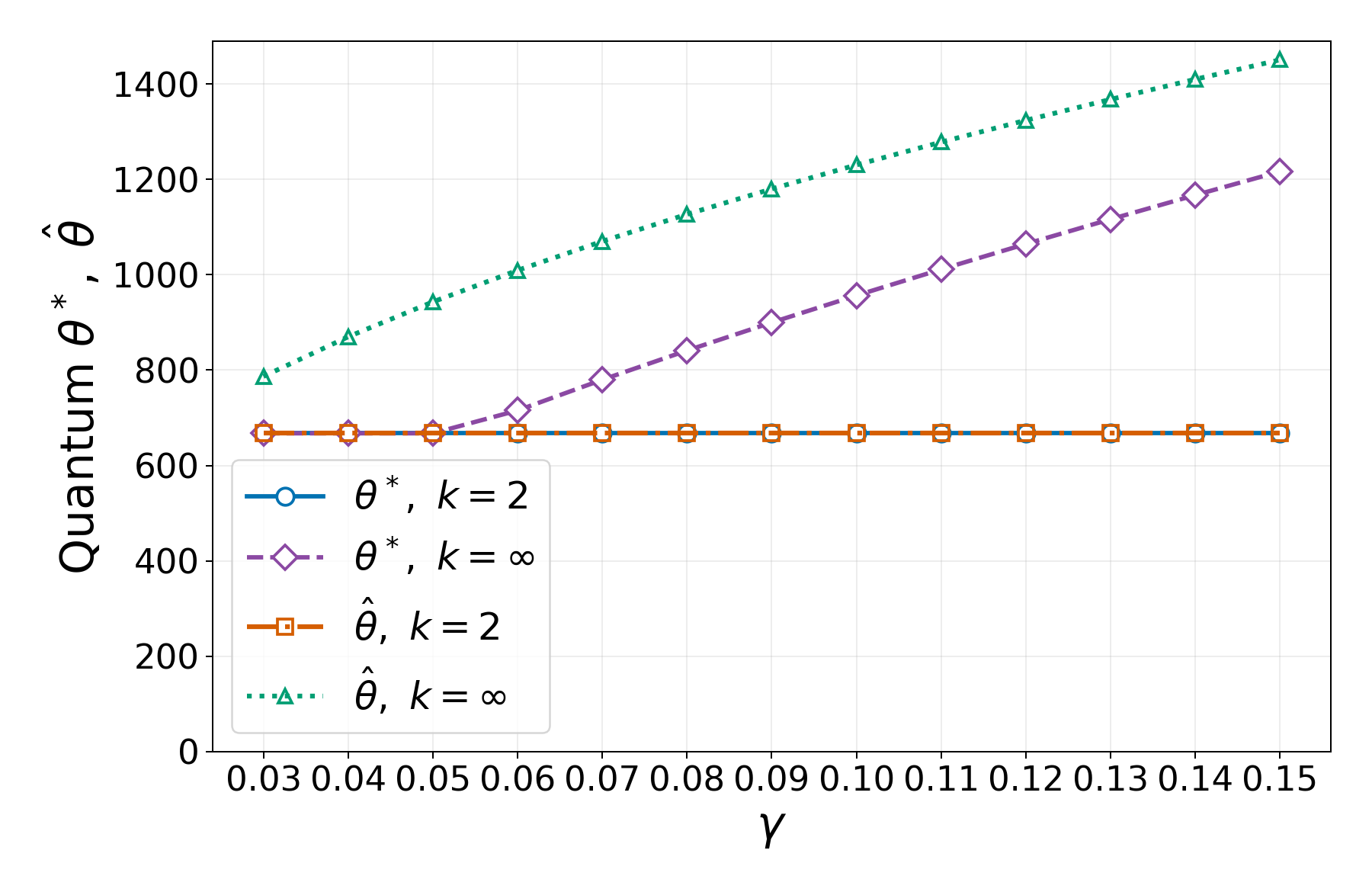}
        \caption{$\theta^*$, $\widehat \theta$ vs $\gamma$,$\ell_C=3\mathbb{E}[L]$, $\beta = 0.39$}
        \label{fig:optimal_theta_vs_gamma}
    \end{subfigure}
    \par\medskip
    \begin{subfigure}[t]{0.30\linewidth}
        \centering
        \includegraphics[width=\linewidth]{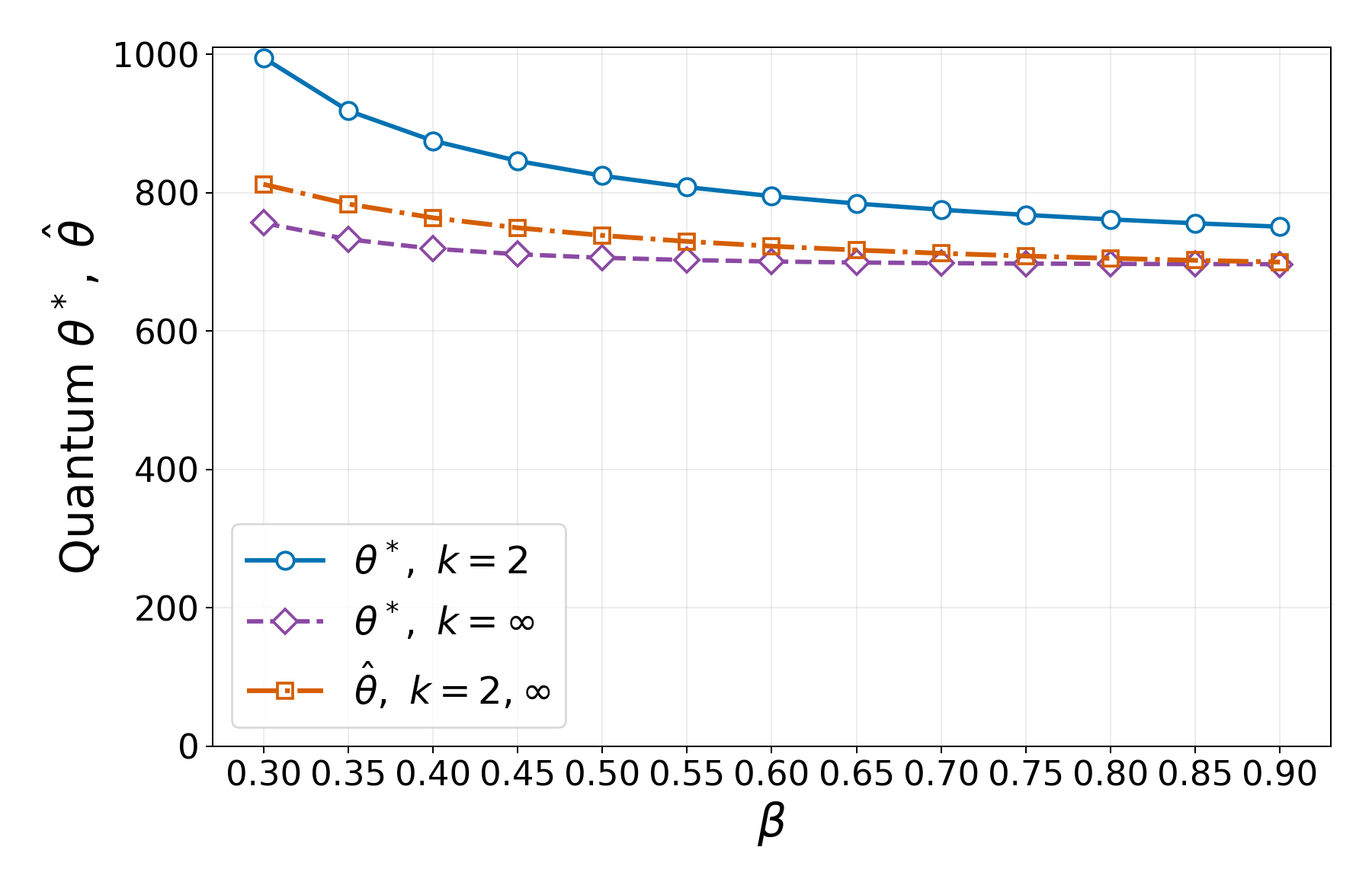}
        \caption{$\theta^*$, $\widehat \theta$ vs $\beta$, $\ell_C = 8\mathbb{E}[L],$ $\gamma = 0.1$}
        \label{fig:optimal_theta_vs_beta_8EL}
    \end{subfigure}\hspace{0.05\linewidth}%
    \begin{subfigure}[t]{0.30\linewidth}
        \centering
        \includegraphics[width=\linewidth]{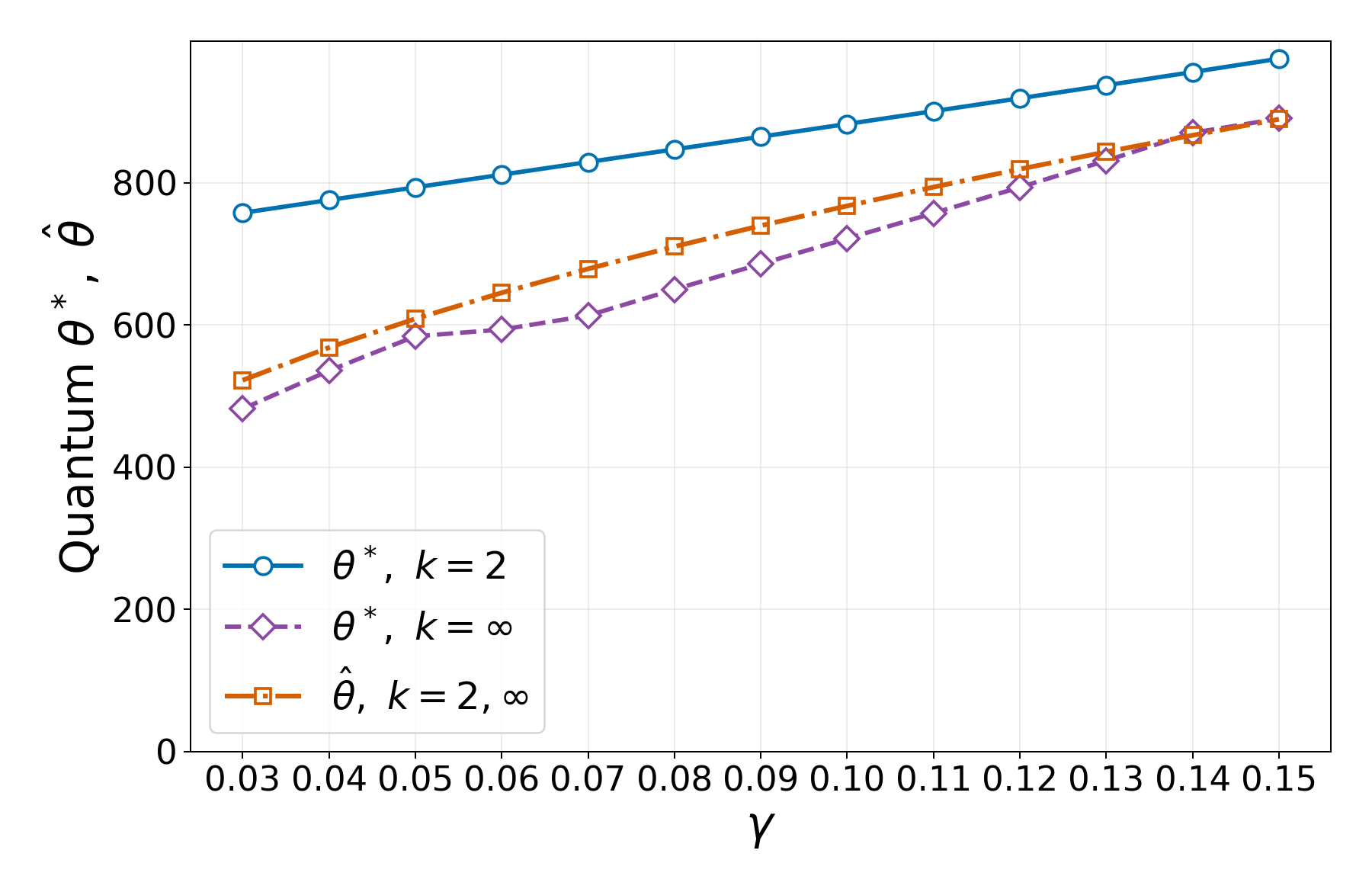}
        \caption{$\theta^*$, $\widehat \theta$ vs $\gamma$,$\ell_C=8\mathbb{E}[L]$, $\beta = 0.39$}
        \label{fig:optimal_theta_vs_gamma_8EL}
    \end{subfigure}
    \caption{$\theta^*$ for MLPT-$(\theta,2)$ and MLPT-$(\theta,\infty)$ with varying $\ell_C$, $\beta$, and $\gamma$, where $L\sim \text{LogNorm}(5,0.9^2)$, $\lambda = 0.0036.$}
    \label{fig:optimal_theta_sweep}
\end{figure}
\end{document}